\newtheorem{assumptions}{Assumption} 
\newtheorem*{rep@theorem}{\rep@title}
\newcommand{\newreptheorem}[2]{%
\newenvironment{rep#1}[1]{%
 \def\rep@title{#2 \ref{##1}}%
 \begin{rep@theorem}}%
 {\end{rep@theorem}}}
\theoremstyle{definition}
\newtheorem{defn}{Definition} 
\newtheorem{remark}{Remark} 
\newtheoremstyle{tasksty}% name of the style to be used
  {3pt}% measure of space to leave above the theorem. E.g.: 3pt
  {3pt}% measure of space to leave below the theorem. E.g.: 3pt
  {\em}% name of font to use in the body of the theorem
  {3pt}% measure of space to indent
  {\color{MidnightBlue}\sffamily\mdseries\upshape}% name of head font
  {:}% punctuation between head and body
  { }% space after theorem head; " " = normal interword space
  {}% Manually specify head
\theoremstyle{tasksty}
\newtheoremstyle{sGoal}% name of the style to be used
  {3pt}% measure of space to leave above the theorem. E.g.: 3pt
  {3pt}% measure of space to leave below the theorem. E.g.: 3pt
  {\em}% name of font to use in the body of the theorem
  {3pt}% measure of space to indent
  {\color{BrickRed}\sffamily\mdseries\upshape}% name of head font
  {:}% punctuation between head and body
  { }% space after theorem head; " " = normal interword space
  {}% Manually specify head
\theoremstyle{sGoal}
\theoremstyle{plain}
\newtheorem{theorem}{Theorem}
\newtheorem{lemma}{Lemma}
\newcommand{\scP}{\mathscr{P}}
\newcommand{\bbR}{\mathbb{R}}
\newcommand{\be}{\begin{equation}}
\newcommand{\ee}{\end{equation}}
\newcommand{\KL}{\textup{KL}}
\newcommand{\sP}{\mathscr{P}}
\newcommand{\sR}{\mathscr{R}}
\newcommand{\sB}{\mathscr{B}}
\newcommand{\sC}{\mathscr{C}}
\newcommand{\BE}{\mathbb{E}}
\newcommand{\BN}{\mathbb{N}}
\newcommand{\BZ}{\mathbb{Z}}
\newcommand{\BR}{\mathbb{R}}
\newcommand{\calB}{\mathcal{B}}
\newcommand{\calY}{\mathcal{Y}}
\newcommand{\calJ}{\mathcal{J}}
\newcommand{\calF}{\mathcal{F}}
\newcommand{\supp}{\mathsf{supp}}
\newcommand{\Reals}{\mathbb{R}}
\newcommand{\bp}{\boldsymbol{p}}
\newcommand{\eps}{\varepsilon}
\definecolor{light-gray}{gray}{.90}
\definecolor{aliceblue}{rgb}{0.94, 0.97, 1.0}
\definecolor{airforceblue}{rgb}{0.36, 0.54, 0.66}
\definecolor{bleudefrance}{rgb}{0.19, 0.55, 0.91}
\definecolor{cerulean}{rgb}{0.0, 0.48, 0.65}
\newmdenv[%
  %backgroundcolor=light-gray, %
  %linecolor=black,
  linewidth =0pt,%
  fontcolor=bleudefrance,
  innertopmargin=2pt,
  innertopmargin=2pt,
  leftmargin = 0pt,
  rightmargin = 0pt,
  innerleftmargin = 2pt,
  innerrightmargin = 2pt,
  skipabove = 6pt,%
  skipbelow = 6pt
]{RQ}
\newmdenv[%
  backgroundcolor=aliceblue, %
%   linecolor=black,
  linewidth = 2pt,%
  skipabove = 10pt,%
  skipbelow = 10pt,
  pstrickssetting={linestyle=dashed,},
  linecolor=airforceblue,
  middlelinewidth=2pt
]{TODO}
\title{Cactus Mechanisms: Optimal Differential Privacy Mechanisms in the Large-Composition Regime}
\def\blfootnote{\gdef\@thefnmark{}\@footnotetext}
\author{
Wael Alghamdi\thanks{Corresponding author, remaining authors in alphabetical order.}~\thanks{W.\hspace{-0.5pt} Alghamdi and F.\hspace{-0.5pt} P.\hspace{-0.5pt} Calmon are with the School of Engineering and Applied Science, Harvard University (emails: {alghamdi@g.harvard.edu, flavio@seas.harvard.edu})}~,
Shahab Asoodeh\thanks{S.\hspace{-0.5pt} Asoodeh is with the Department of Computing and Software,\hspace{-1pt} McMaster\hspace{-0.5pt} University\hspace{-1.5pt} (email:\hspace{-3pt} {asoodehs@mcmaster.ca})}~,
Flavio P. Calmon${}^\dagger$,\\
Oliver Kosut\thanks{O.\hspace{-0.5pt} Kosut, L.\hspace{-0.5pt} Sankar, and F.\hspace{-0.5pt} Wei are with the School of Electrical, Computer, and Energy Engineering, Arizona State University (emails: {\{\hbox{okosut},lsankar,fwei16\}@asu.edu})}~,
Lalitha Sankar${}^\S$, and
Fei Wei${}^\S$
}
\date{}
\begin{document}

\onecolumn

\maketitle

\begin{abstract}
Most differential privacy mechanisms are applied (i.e., composed) numerous times on sensitive data. We study the design of optimal differential privacy mechanisms in the limit of a large number of compositions. As a consequence of the law of large numbers, in this regime the best privacy mechanism is the one that minimizes the Kullback-Leibler divergence between the conditional output distributions of the mechanism given two different inputs.  We formulate an optimization problem to minimize this divergence subject to a cost constraint on the noise. We first prove that additive mechanisms are optimal. Since the optimization problem is infinite dimensional, it cannot be solved directly; nevertheless, we quantize the problem to derive near-optimal additive mechanisms that we call ``cactus mechanisms'' due to their shape. We show that our quantization approach can be arbitrarily close to an optimal mechanism. Surprisingly, for quadratic cost, the Gaussian mechanism is strictly sub-optimal compared to this cactus mechanism.  Finally, we provide numerical results which indicate that cactus mechanism outperforms the Gaussian mechanism for a finite number of compositions.
\end{abstract}
\blfootnote{This material is based upon work supported by the National Science Foundation under Grant Nos. CAREER-1845852, CIF-1900750, CIF-1815361, CIF-1901243, CIF-1908725, CIF-2007688, and SaTC 2031799. The authors also thank Oracle Research for a gift that supported this work.}
\noindent
This paper is Part I in a pair of papers, where Part II is~\cite{Fisher}.

\section{Introduction}

Likelihood ratios are at the heart of most privacy metrics. Consider the problem of quantifying the privacy loss suffered by  a sensitive variable $X$ given an observation of a disclosed variable $Y$. For example, $X$ may represent a dataset and $Y$ a randomized function computed over $X$. Privacy can be measured in terms of properties of the \emph{privacy loss random variable}, defined as
\begin{equation}
L_{x,x'}\coloneqq \log\frac{dP_{Y|X=x}}{dP_{Y|X=x'}}(Y),
\end{equation}
where $Y\sim P_{Y|X=x}$ and $x,x'\in \mathcal{X} \coloneqq \mathsf{supp}(X).$  The channel $P_{Y|X}$ is often referred to as a \emph{privacy mechanism}. 

Today, the most popular privacy definition (including, in practice~\cite{erlingsson2014rappor,Apple_Privacy,Facebook020GuidelinesFI}) is \emph{differential privacy} (DP), which quantifies privacy in terms of $L_{x,x'}$ when $x,x'$ are close or ``neighboring.'' Thus, given a metric $d:\mathcal{X}\times \mathcal{X}\to \Reals$, $P_{Y|X}$ is said to be $(\eps,\delta)$-differentially private ($(\varepsilon,\delta)$-DP)~\cite{Dwork_Calibration} if 
\begin{equation}\label{eq:DP0}
    \sup_{d(x,x')\leq s}\  \sup_{A\subset \mathcal Y}\left[P_{Y|X=x}(A) - e^\eps P_{Y|X=x'}(A)\right]\leq \delta,
\end{equation}
where $s$ determines when inputs $x$ and $x'$ are neighboring, and $\calY:=\supp(Y)$. Intuitively, if a mechanism is  $(\eps, \delta)$-differentially private for sufficiently small $\eps$ and $\delta$, then an adversary observing $Y$ cannot accurately distinguish between small changes in $X$.  

Most privacy mechanisms are applied several times on sensitive data. Quantifying privacy guarantees under multiple \emph{compositions} of a mechanism is a challenging problem. In the simple case where  the same mechanism $P_{Y|X}$ is independently applied $n$ times on data $X$ generating output $Y^n$, i.e., $P_{Y^n|X}=\prod_{i=1}^n P_{Y_i|X}$,  the  privacy loss random variable is given by
\begin{equation}
L_{x,x'}^n\coloneqq \sum_{i=1}^n \log\frac{dP_{Y_i|X=x}}{dP_{Y_i|X=x'}}(Y_i),
\end{equation}
where $Y_i\sim P_{Y_i|X=x}.$ Differential privacy can be cast in terms of the privacy loss random variable. The reader can directly verify that $n$ independent applications of a mechanism $P_{Y|X}$ is $(\eps, \delta)$-DP if 
\begin{equation}
    \sup_{d(x,x')\leq s} \mathbb{E}\left[ \left(1-e^{-(L_{x,x'}^n-\eps)} \right)^{+} \right]\leq  \delta.
\end{equation}
From the law of large numbers, the distribution of $L_{x,x'}^n/n$ will concentrate around its mean, the KL-divergence, as
\begin{equation}\label{eq:PRVmean}
    \frac{1}{n}\mathbb{E}\left[L_{x,x'}^n\right]=D\left(P_{Y|X=x}\|P_{Y|X=x'}\right).
\end{equation}
Since the function $f(u)\coloneqq (1-e^{-nu+\eps})^+$ is non-decreasing, in the limit of large compositions, privacy mechanisms with lower values of  $D(P_{Y|X=x}\|P_{Y|X=x'})$ will enjoy stronger $(\eps,\delta)$-DP guarantees. Thus, regardless of the exact distribution of the privacy loss random variable, its mean~\eqref{eq:PRVmean} plays a central role in the privacy guarantees offered after many compositions. In applications such as privacy-ensuring machine learning, the number of compositions frequently exceeds~$n=10^3$.

We study the design of privacy mechanisms with favorable $(\eps,\delta)$-DP guarantees under a large number of compositions. Our approach departs from previous work in that we focus on the large-composition regime instead of optimizing \eqref{eq:DP0}. Since after many compositions, privacy will be mostly determined by the mean of the privacy loss random variable \eqref{eq:PRVmean}, we solve the optimization problem
\begin{equation}\label{fe}
\begin{array}{ll}
\displaystyle \inf_{P_{Y|X}\in \sR}
& \displaystyle \displaystyle \sup_{|x-x'|\le s} D(P_{Y|X=x} \| P_{Y|X=x'}) \\[2ex]
\text{subject to} 
& \displaystyle \sup_{x\in \mathbb{R}} ~ \mathbb{E}[c(Y-x)\mid X=x] \le C,
\end{array}
\end{equation}
where $c:\mathbb{R}\to [0,\infty)$ is a pre-specified cost function,  $s,C>0$ are constants, and  $\sR$ is the set of all Markov kernels on $\mathbb{R}$. Note that the cost function  is critical: without the constraint,~\eqref{fe} can be trivially solved by any mechanism that is independent of $X$.

Our main contributions are as follows:
\begin{enumerate}[leftmargin=12pt]
    \item We show (Thm.~\ref{gb}) that additive mechanisms---i.e.,  where $Y=X+Z$ for a noise variable $Z$ independent of $X$---suffice to minimize~\eqref{fe}.
    \item Even restricting to additive mechanisms, \eqref{fe} is an infinite-dimensional optimization problem, so it cannot be solved directly. Instead, we formulate an approximate problem that is finite dimensional and can be solved efficiently. We prove (Thm.~\ref{gk}) that this approximate problem can get arbitrary close to optimal.
    \item We solve the approximate problem to derive (near) optimal mechanisms for the quadratic cost function, i.e., $c(x)=x^2$. We dub the resulting mechanism the ``cactus mechanism'' due to the shape of the distribution (see Fig.~\ref{fig:cactus}). Surprisingly, the Gaussian distribution is strictly sub-optimal for \eqref{fe}, as the cactus mechanism achieves a smaller KL divergence for the same variance.
    \item We bound the $(\varepsilon,\delta)$-DP for the cactus mechanism in the context of sub-sampled stochastic gradient descent using the moments accountant method. Compared to the same analysis applied to a Gaussian mechanism, our approach does better for a reasonable number of compositions.
\end{enumerate}

\subsection{Related Work}

Identifying optimal mechanisms is a fundamental and challenging problem in the domain of differential privacy. There have been several works in the literature that have attempted to address this problem. For instance, within the class of additive noise mechanisms and under the single shot setting (i.e., no composition), Ghosh et al.~\cite{ghosh2012universally}
showed that the geometric mechanism is universally optimal for $(\eps,0)$-DP in a Bayesian framework, and Gupte and
Sundararajan~\cite{Gupte_universally} derived the optimal noise distribution in a minimax cost framework. For a rather general cost function, the optimal noise distribution was shown to have a staircase-shaped density function~\cite{geng2015optimal,geng2015staircase, OptimalDP2}.

Geng and Viswanath~\cite{Geng_OptimalApproximate} showed that for $(\eps, \delta)$-DP and integer-valued query functions, in the single-shot setting,  the discrete uniform noise distribution and the discrete Laplacian noise distribution are asymptotically optimal (for $L^1$ and $L^2$ costs) within a constant multiplicative gap in the high privacy regime (i.e., both $\eps$ and $\delta$ approach zero). Geng et al.~\cite{Geng_truncatedLaplace} studied the same setting except for real-valued query functions and identified truncated Laplace distribution is asymptotically optimal in various high privacy regimes. Finally, Geng et al.~\cite{Geng_Uniform}  showed that the optimal noise distribution for real-valued query and $(0, \delta)$-DP is uniform  with probability
mass at the origin. Our work differs from these works in that we focus on the optimal mechanisms under a large number of compositions, rather than the single shot setting.

When considering a composition of $n$ mechanisms, an important line of research has been to derive tighter composition results: relationships between the DP parameters of the composed mechanism and the parameters of each constituent mechanism. There are several composition results in the literature, such as~\cite{dwork2010boosting,Vadhan_Murtagh,kairouz_Composition, abadi2016deep, asoodeh2021three, Mohammadi_Bucket}. More recently, Dong et al.\ \cite{dong2019gaussian} have proposed a composition result for large $n$ and for a new variant of DP, called Gaussian-DP, that leverages the central limit theorem. These results can be sub-optimal (see, for example, \cite[Fig. 1]{gopi2019numerical}). Consequently, numerical composition results have gained increasing traction as they lead to easier, yet powerful, methods for accounting the privacy loss in composition~\cite{koskela2020computing, gopi2019numerical, koskela21a_FFT, zhu2021optimal}. In particular, 
Koskela et al.~\cite{koskela2020computing} obtained a numerical composition result based on a numerical approximation of an integral that gives the DP parameters of the composed mechanism.  The approximation is carried out by discretizing the integral and by evaluating discrete convolutions via the fast Fourier transform algorithm. The running time and memory needed for this approximation were subsequently improved~\cite{gopi2019numerical}. While our work shares the focus on the large composition regime, we are primarily interested in synthesizing optimal mechanisms rather than analyzing existing mechanisms.

\subsection{Notation}

The Lebesgue measure on $\BR$ is denoted by $\lambda$. We denote by $\sR$ the set of all Markov kernels\footnote{It is true that any conditional distribution from $\BR$ into $\BR$ has a version that is a Markov kernel~\cite[Chapter 4, Theorem 2.10]{Erhan_book}.} on $\BR$, i.e., conditional distributions $P_{Y|X}$ for $\BR$-valued $X$ and $Y$ such that $x\mapsto P_{Y|X=x}(B)$ is a Borel function for all Borel sets $B\subset \BR$. The set $\sB$ denotes all Borel probability measures on $\BR$. We fix a real-valued random variable $X$ throughout, and let $P_X\in \sB$ be its induced Borel probability measure. The KL-divergence is denoted by $D(P\|Q)$, and also by $D(p\|q)$ if $P,Q\ll \lambda$ with densities $p$ and $q$. The expectation is denoted by $\BE_P[f]:=\int_{\BR}f \, dP$, and also by $\BE_p[f]$ if $P\ll \lambda$ has probability density function (PDF) $p$. We let $T_a$ denote the shift operator, i.e., for a function $f$ of a real variable the function $T_af$ is defined as $(T_af)(x) := f(x-a)$, and for a measure $P$ the measure $T_aP$ is defined by $(T_aP)(B) := P(B-a)$.

\section{Optimality of Additive Continuous Channels} \label{Sec:characterizations}

We start by deriving characterizations of  solutions to the optimization problem~\eqref{fe}. The difficulty of this problem lies in the fact that we are optimizing over all conditional distributions. This not only makes the problem infinite-dimensional, but it also renders direct approaches ineffective. The main result of this section, shown in Theorem~\ref{gb}, is that it suffices to consider continuous additive channels. In other words, the optimization in~\eqref{fe} may be restricted to conditional distributions of the form $P_{Y|X=x} = T_xP$ for some Borel probability measure $P$ on $\BR$ that is absolutely continuous with respect to the Lebesgue measure. Equipped with this reduction, we build in the next section an explicit family of finitely-parametrized distributions that are also optimal in~\eqref{fe}.

\subsection{Assumptions and Definitions}

Throughout the paper, we require the cost function to satisfy the following properties.
\begin{assumptions} \label{assumption:c1}
The cost function $c:\mathbb{R}\to\mathbb{R}$ satisfies:
\begin{itemize}
\item\emph{Positivity:} $c(x)\ge 0$ for all $x\in\mathbb{R},$ and $c(0)=0.$
\item\emph{Symmetry:} $c(x)=c(-x)$ for all $x\in\mathbb{R}$.
\item\emph{Monotonicity:} $c(x)\le c(x')$ if $|x|\le |x'|$.
\item\emph{Continuity:} $c$ is continuous over $\BR.$
\item\emph{Tail regularity:} There exist $\alpha,\beta> 0$ such that $c(x) \sim \beta x^\alpha$ as $x\to \infty.$
\end{itemize}
\end{assumptions}
A natural choice of cost function is the quadratic cost $c(x)=x^2$, but we allow $c(x)$ to be any function that satisfies the above assumptions. For example, $c(x)=|x|^\alpha$ for any positive $\alpha$ is a natural family of cost functions.

Let $\sP\subset \sR$ be the set of conditional distributions $P_{Y|X}$ satisfying the cost constraint in~\eqref{fe}, i.e., set
\begin{equation}
\sP := \left\{ P_{Y|X} \in \sR  ~; ~  \sup_{x\in \mathbb{R}} ~ \mathbb{E}[c(Y-x)\mid X=x] \le C \right\}.
\end{equation}
The infimal value in~\eqref{fe} is then
\begin{equation}\label{continuous_optimization}
    \text{KL}^\star := \inf_{P_{Y|X}\in\mathscr{P}}\ \sup_{x,x'\in\mathbb{R}:|x-x'|\le s}
    \ D(P_{Y|X=x}\|P_{Y|X=x'}).
\end{equation}
We are interested in computing $\text{KL}^\star,$ as well as mechanisms $P_{Y|X}$ that approach this optimal value. Note that, for clarity of presentation, we suppress the dependence on $(s,c,C)$ in the notations $\sP$ and $\text{KL}^\star.$

In the main problem~\eqref{fe}, we allow $P_{Y|X}$ to be any mechanism that produces $Y$ given $X$. A more restrictive but natural and easy-to-implement class of mechanisms is the \emph{additive} mechanism class. An additive mechanism is given by $P_{Y|X=x}(B)=T_xP(B)$ where $P$ is a Borel probability measure on $\mathbb{R}$. In other words, an additive mechanism $P_{Y|X}$ has $Y$ of the form $Y = X+Z$ for some noise random variable $Z\sim P \in \sB$ that is independent of the input $X.$ Let $\sP_{\text{add}}\subset\sB$ be the set of additive mechanisms satisfying the cost constraint in~\eqref{fe},
\begin{equation}
    \sP_{\text{add}} := \left\{ P \in \sB ~ ; ~ \BE_P[c] \le C \right\}.
\end{equation}
Since the KL-divergence is shift-invariant, restricting the optimization~\eqref{fe} to additive mechanisms amounts to considering the simplified optimization problem
\begin{equation} \label{gi}
    \text{KL}_{\text{add}}^\star
    :=\inf_{P\in\scP_{\text{add}}}\ \sup_{a\in\mathbb{R}:|a|\le s}\ D(P\|T_a P).
\end{equation}
Of course, it is immediate that $\text{KL}^\star \le \text{KL}_{\text{add}}^\star$. In fact, we will show below that these quantities are the same, meaning that there is no loss in restricting to additive mechanisms.

\subsection{Optimality of Continuous Additive Mechanisms}

The optimization problem in~\eqref{fe} is a convex problem, but the fact that the feasible set $\mathscr{P}$ is of infinite dimension means it cannot be solved directly, nor do the tractable properties one expects of a convex optimization problem necessarily follow. For example, in any finite dimensional convex optimization problem, a symmetry in the problem leads to the same symmetry in the solution. In this problem, one can see that shifting the mechanism---i.e., given $P_{Y|X}$, construct $Q_{Y|X=x}(B)=P_{Y|X=x+z}(B+z)$ for some $z$---does not change the cost constraint nor the objective value in \eqref{fe}. Thus, one might be inclined to conclude that the optimal mechanism is invariant to a shift (i.e., is an additive mechanism). Unfortunately, the infinite-dimensional nature of the problem means that this conclusion is not immediate. We resolve this issue in the following theorem which states that additive mechanisms are in fact optimal in \eqref{fe}.

\begin{theorem} \label{gb}
We have that 
\begin{equation}
    \textup{KL}^\star = \textup{KL}_{\textup{add}}^\star,
\end{equation}
and there exists a $P^\star \in \sP_{\textup{add}}$ achieving this value. Further, any such $P^\star$ is necessarily absolutely continuous.
\end{theorem}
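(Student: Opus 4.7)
The plan is to handle all three claims with a single shift-averaging construction followed by standard tightness and lower-semicontinuity arguments, and then a Fubini calculation for absolute continuity. The easy inequality $\text{KL}^\star \le \text{KL}^\star_{\text{add}}$ is immediate from $\sP_{\text{add}}\subset\sP$. For the reverse, given any $P_{Y|X}\in\sP$, I would exploit the problem's shift-symmetry by defining the shift-averaged mechanism
\[
R_{M,Y|X=x}(B) \;:=\; \int T_{-z}P_{Y|X=x+z}(B)\,d\mu_M(z),
\]
where $\mu_M$ is uniform on $[-M,M]$. The cost $\BE[c(Y-x)\mid X=x]$ under $R_M$ is the $\mu_M$-average of $\BE[c(Y-(x+z))\mid X=x+z]\le C$, and joint convexity of KL together with shift-invariance of KL gives
\[
D(R_{M,Y|X=x}\|R_{M,Y|X=x'}) \;\le\; \int D(P_{Y|X=x+z}\|P_{Y|X=x'+z})\,d\mu_M(z) \;\le\; \sup_{|u-u'|\le s} D(P_{Y|X=u}\|P_{Y|X=u'}).
\]

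Next I would pass to a weak limit. Tail regularity of $c$ and the uniform bound $\BE_{R_{M,Y|X=0}}[c]\le C$ make $\{R_{M,Y|X=0}\}_M$ tight, so along some subsequence $R_{M_k,Y|X=0}\Rightarrow P^\star$. A direct calculation shows the total-variation distance between $R_{M,Y|X=a}$ and $T_a R_{M,Y|X=0}$ is at most $|a|/M$ (their defining integrals agree on the common range and differ only on two intervals of total length $2|a|$), so $R_{M_k,Y|X=a}\Rightarrow T_a P^\star$ as well. Joint lower semicontinuity of $D(\cdot\|\cdot)$ under weak convergence and lower semicontinuity of $P\mapsto \BE_P[c]$ (Fatou for continuous nonnegative $c$) then yield $P^\star\in\sP_{\text{add}}$ with $\sup_{|a|\le s}D(P^\star\|T_a P^\star)\le \sup_{|x-x'|\le s}D(P_{Y|X=x}\|P_{Y|X=x'})$. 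Infimizing over $P_{Y|X}\in\sP$ gives $\text{KL}^\star_{\text{add}}\le \text{KL}^\star$, and the same tightness/semicontinuity argument applied to a minimizing sequence inside $\sP_{\text{add}}$ delivers an achieving $P^\star$.

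For absolute continuity, I would first observe $\text{KL}^\star_{\text{add}}<\infty$: a Gaussian of sufficiently large variance lies in $\sP_{\text{add}}$ (finite $c$-moment by tail regularity) and achieves $D(P\|T_a P)=a^2/(2\sigma^2)<\infty$. So any optimal $P^\star$ satisfies $D(P^\star\|T_a P^\star)<\infty$ for every $|a|\le s$. Suppose $P^\star$ had a nonzero singular part $P^\star_s$ concentrated on a Lebesgue-null Borel set $S$. By Fubini,
\[
\int_{-s}^s P^\star_s(S-a)\,da \;=\; \int_{\mathbb{R}} \lambda\bigl((S-x)\cap[-s,s]\bigr)\,P^\star_s(dx) \;=\; 0,
\]
since $\lambda(S-x)=0$. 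Hence $P^\star_s(S-a)=0$ for a.e.\ $a\in[-s,s]$, and for these $a$ one has $(T_a P^\star)(S)=P^\star_{ac}(S-a)+P^\star_s(S-a)=0$ while $P^\star(S)=P^\star_s(S)>0$, forcing $D(P^\star\|T_a P^\star)=+\infty$---a contradiction.

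The main obstacle I anticipate is the limiting step: distilling a genuinely additive channel from $R_M$ requires coordinating tightness of $\{R_{M,Y|X=0}\}_M$, the TV estimate $\text{TV}(R_{M,Y|X=a},T_a R_{M,Y|X=0})\le|a|/M$ that identifies the weak limit as a shift of a single measure, and joint lower semicontinuity of the KL divergence. Each ingredient is standard, but they must be orchestrated on the \emph{same} subsequence so that the limit $P^\star$ simultaneously inherits the cost bound and the worst-case KL bound uniformly over $a\in[-s,s]$.
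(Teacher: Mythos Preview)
Your proposal is correct and follows essentially the same route as the paper: shift-averaging against a growing uniform window, the TV bound $|a|/M$ to identify the weak limit as a shift, tightness via the cost constraint plus Prokhorov, joint lower semicontinuity of KL, and the Fubini argument for absolute continuity are exactly the ingredients the paper uses. The only differences are organizational (the paper diagonalizes the averaging window with a minimizing sequence rather than treating equality and existence separately) and that the paper spells out the measurability needed to justify your ``joint convexity of KL over a continuous mixture'' step as a separate lemma on conditioning increasing divergence.
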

\begin{proof}[Proof sketch]
The proof is given in Appendix~\ref{gg}. 
We give here only a high level description of the approach. Let $P_{Y|X}^{(k)}$ be a sequence achieving $\text{KL}^\star$. We make these mechanisms increasingly closer to being additive, while sacrificing neither feasibility nor utility, by considering the convex combinations
\begin{equation}
    \overline{P}_{Y|X=x}^{(k)}(B) := \BE\left[ P_{Y|X=x+Z_k}^{(k)}(B+Z_k) \right]
\end{equation}
where $Z_k \sim \mathrm{Unif}([-k,k])$. Specifically, one can invoke Prokhorov's theorem on the $\overline{P}_{Y|X}^{(k)}$, thereby extracting a probability measure $P^\star$ such that $\overline{P}_{Y|X=x}^{(k)} \to T_x P^\star$ weakly for each fixed $x$. Finally, we show that the mechanism $P^\star$ is optimal by invoking joint convexity and lower-semicontinuity of the KL-divergence.
\end{proof}
\begin{remark}
The proof of $P^\star \ll \lambda$ only relies on the property that $P^\star \ll T_a P^\star$ for every $|a|\le s$, which holds in view of $\text{KL}^\star < \infty$. Therefore, any \emph{feasible} additive mechanism must be absolutely continuous with respect to the Lebesgue measure, i.e., if $\mu \in \sB$ satisfies $\sup_{|a|\le s} D(\mu \| T_a\mu)<\infty$ then we necessarily have $\mu \ll \lambda$.
\end{remark}

\section{Numerical Approximation: The Cactus Distribution} \label{ej}

The optimization problem over additive mechanisms in~\eqref{gi} is  infinite-dimensional, so it cannot be solved numerically as-is, and it appears to have no closed-form solution for non-trivial cost functions. The lack of closed-form solution is true even for the simple case of $c(x)=x^2$: to our surprise, as will be illustrated later, the Gaussian mechanism is not optimal!\footnote{Of course, simply because Gaussian is not optimal does not imply that there is no closed-form solution. It is possible to write a set of KKT conditions for~\eqref{gi}, which we have omitted from this paper in the interest of space. This set of KKT conditions cannot be solved in closed-form.} In our companion paper~\cite{Fisher}, we explore the regime where $s\to 0^+$; in this limit, we show that the optimal distribution can be determined exactly, and in fact for quadratic cost the limiting optimal distribution \emph{is} Gaussian---although for other costs the optimal distribution is much more surprising. 

In the regime of fixed positive $s$, to find practically achievable near-optimal mechanisms, we resort to numerical approximation of~\eqref{gi}. In this section, we fix $s=1$. We can do this without loss of generality simply by scaling: that is, the optimization problem in~\eqref{gi} with sensitivity $s$ and cost function $c(x)$ is equivalent to the same problem with sensitivity $1$ and cost function $c(sx)$.

To approximate~\eqref{gi} by a numerically tractable problem, we (i) quantize the distribution, and (ii) only explicitly parameterize the distribution in a certain interval. Specifically, we construct a mapping from finite-length vectors to continuous distributions as follows. 

\begin{defn} \label{def:cactus}
Fix two positive integers $n$ and $N$, and a constant $r\in(0,1)$. Consider the partition of $\BR$ by intervals $\{ \calJ_{n,i} \}_{i\in \BZ}$ defined by: $\calJ_{n,0} := [-1/(2n),1/(2n)]$ and
\begin{equation}
    \calJ_{n,i}:= \left\lbrace \begin{array}{cl} 
    \left( \frac{i-1/2}{n}, \frac{i+1/2}{n} \right], & \text{if } i>0, \vspace{1mm} \\ 
    \left[ \frac{i-1/2}{n}, \frac{i+1/2}{n} \right), & \text{if }  i<0.
    \end{array} \right.
\end{equation}
We associate to each vector $\bp=(p_{0},p_{1},\ldots,p_N)  \in [0,1]^{N+1}$ a piecewise constant function that is defined by 
\begin{equation}\label{continuous_mechanism}
f_{n,r,\bp}(x)=\begin{cases} n p_{|i|}, & \text{if } x\in \calJ_{n,i}, \text{with } |i|< N,\\
np_{N} r^{|i|-N}, & \text{if } x\in \calJ_{n,i}, \text{with } |i|\ge N.
\end{cases}
\end{equation}
We also associate with $f_{n,r,\bp}$ the Borel measure $P_{n,r,\bp}$, where
\begin{equation}
    P_{n,r,\bp}(B) := \int_B f_{n,r,\bp}(x) \, dx.
\end{equation}
\end{defn}
\begin{remark}
Note that
\begin{equation} \label{bc}
    \int_{\BR} f_{n,r,\bp}(x) \, dx = p_0+\sum_{i=1}^{N-1}2p_i+\frac{2p_N}{1-r} =:S_{r,\bp}.
\end{equation}
If $S_{r,\bp}=1,$ then $P_{n,r,\bp}$ is a probability measure with density $f_{n,r,\bp}$. This distribution is symmetric around the origin, i.e., $f_{n,r,\bp}(x)=f_{n,r,\bp}(-x)$. Further, its tails decay almost geometrically: for $(N+1/2)/n < x_1 < x_2$ one has $f_{n,r,\bp}(x_2)=r^{nk}\cdot f_{n,r,\bp}(x_1)$ where $k= \left( \lceil nx_2-1/2 \rceil - \lceil nx_1 - 1/2 \rceil \right)/n \approx x_2-x_1$.
\end{remark}

The main results of this section are: we show that the distribution family introduced in Definition~\ref{def:cactus} is optimal for~\eqref{fe}, and we show that the optimal distribution \emph{within} this family (which we will call the \emph{cactus distribution}) is obtainable via a tractable finite-dimensional convex optimization problem.

We use the following notation. Consider the restriction of~\eqref{gi} to the mechanisms constructible by Definition~\ref{def:cactus}. For a fixed triplet $(n,N,r)\in \BN^2 \times (0,1)$, consider the set of mechanisms $\sC_{n,N,r}\subset \sB$,
\begin{equation}
    \sC_{n,N,r} := \left\{ P_{n,r,\bp} ~ ; ~  \bp \in [0,1]^{N+1}, S_{r,\bp} = 1 \right\}.
\end{equation}
(Recall the definition of $S_{r,\bp}$ from \eqref{bc}.) Denote the optimal value achievable by the class $\sC_{n,N,r}$ by
\begin{equation} \label{gj}
    \text{KL}^\star_{n,N,r}(C) := \inf_{\substack{P \in \sC_{n,N,r} \\
    \BE_P[c] \le C}} ~ \sup_{|a|\le 1} D(P\|T_a P).
\end{equation}

We show next that we may restrict the shift $a$ in~\eqref{gj} to take values over the finite set $\{1/n,2/n,\cdots,1\}$ (rather than varying over the whole interval $[-1,1]$), thereby rendering~\eqref{gj} a finite-dimensional optimization problem amenable to standard numerical convex-programming methods.

For each $i\in \BZ,$ we denote the constants
\begin{equation} \label{mh}
c_{n,i} := \int_{\calJ_{n,i}} nc(x) \, dx.
\end{equation}

\begin{theorem} \label{eh}
Fix $r\in(0,1)$, and positive integers $n<N$. The minimization~\eqref{gj} can be recast as the following convex program over the variable $\bp=(p_0,\cdots,p_N)\in \BR^{N+1}$
\begin{align}
&\underset{\bp}{\textup{minimize}} && \displaystyle 
\max_{k\in\{1,\ldots,n\}} \
\frac{1}{2}\sum_{i=-N+1}^{N-k-1} (p_{|i|}-p_{|i+k|})\log\frac{p_{|i|}}{p_{|i+k|}}
 +\sum_{i=N-k}^{N-1} (p_i-p_Nr^{i+k-N})\log \frac{p_i}{p_Nr^{i+k-N}} \nonumber \\
&&& \displaystyle \qquad+ p_N\frac{1- r^{k}}{1-r} k\log r^{-1} \nonumber \\
&\textup{subject to} &&  \displaystyle p_0c_{n,0}+\sum_{i=1}^{N-1} 2p_ic_{n,i}+2p_N \sum_{i=N}^\infty c_{n,i}r^{i-N}\le C, \nonumber \\
&&& \displaystyle p_0+\sum_{i=1}^{N-1}2p_i+\frac{2p_N}{1-r}=1,  \nonumber \\
&&&  \displaystyle p_i\ge 0\textup{ for all }i\in\{0,\ldots,N\}. \label{finite_d_opt}
\end{align}
\end{theorem}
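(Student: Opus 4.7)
The plan is to (i) reduce the supremum $\sup_{|a|\le 1}D(P\|T_aP)$ in~\eqref{gj} to a maximum over the finite shift set $\{k/n:k=1,\dots,n\}$, (ii) explicitly evaluate $D(P\|T_{k/n}P)$ in terms of $\bp$, and (iii) recognize that the resulting problem is a convex program.

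For step (i), I would exploit the piecewise-constant structure of $f_{n,r,\bp}$. Writing $a=k/n+t$ with $k\in\{0,\dots,n-1\}$ and $t\in[0,1/n)$, observe that on each base interval $\calJ_{n,j}$ the shifted density $x\mapsto f_{n,r,\bp}(x-a)$ takes exactly two values: the value of $f_{n,r,\bp}$ on $\calJ_{n,j-k-1}$ on a left sub-interval of length $t$, and the value on $\calJ_{n,j-k}$ on a right sub-interval of length $1/n-t$. Integrating over each $\calJ_{n,j}$ and summing yields
\[
D(P\|T_aP) \;=\; (nt)\,D(P\|T_{(k+1)/n}P) \;+\; (1-nt)\,D(P\|T_{k/n}P),
\]
so $a\mapsto D(P\|T_aP)$ is piecewise affine on $[0,1]$ and attains its maximum at a grid point $k/n$. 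Since $f_{n,r,\bp}$ is symmetric about the origin, $D(P\|T_{-a}P)=D(P\|T_aP)$, so $\sup_{|a|\le 1}$ may be replaced by $\max_{k\in\{1,\dots,n\}}$ (the $k=0$ term vanishes).

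For step (ii), set $g_i := f_{n,r,\bp}/n$ on $\calJ_{n,i}$, so $g_i=p_{|i|}$ for $|i|<N$ and $g_i=p_Nr^{|i|-N}$ otherwise, with $g_{-i}=g_i$. Starting from the symmetrized form
\[
D(P\|T_{k/n}P) \;=\; \tfrac{1}{2}\sum_{i\in\BZ}(g_i-g_{i+k})\log\frac{g_i}{g_{i+k}},
\]
I would partition $i$ according to whether $|i|<N$ or $|i|\ge N$, and likewise for $|i+k|$. The bulk region $\{-N+1,\dots,N-k-1\}$ yields the first sum in~\eqref{finite_d_opt} together with its $\tfrac12$ factor. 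The upper boundary $\{N-k,\dots,N-1\}$ and its mirror $\{-N-k+1,\dots,-N\}$ (mapped to the upper boundary via $i\mapsto -i-k$ using $g_{-j}=g_j$) contribute identically, combining to the second sum without the $\tfrac12$. The pure tail $\{i\ge N\}\cup\{i\le -N-k\}$ splits into two geometric series, each summing to $p_N(1-r^k)k\log r^{-1}/(1-r)$; half of their sum produces the third term.

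For step (iii), the cost constraint follows from $\BE_P[c]=\sum_{i\in\BZ}g_i\,c_{n,i}$ via \eqref{mh}, with negative indices folded using $c_{n,i}=c_{n,-i}$ (symmetry of $c$) and the tail geometric series summed using the polynomial growth of $c$ versus geometric decay in $r$; the normalization becomes $S_{r,\bp}=1$ as in~\eqref{bc}. Convexity is then automatic: $(a,b)\mapsto(a-b)\log(a/b)$ is jointly convex on $\BR_+^2$, so each $D(P\|T_{k/n}P)$ is convex in $\bp$, hence so is their pointwise maximum, and the remaining constraints are affine. The main obstacle I anticipate is the careful bookkeeping at the boundary between the explicit and tail regions, particularly the mirror identification $i\mapsto -i-k$ that collapses the lower boundary onto the upper one; once that is set up cleanly the remaining arithmetic is routine.
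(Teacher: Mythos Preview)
Your proposal is correct and follows essentially the same route as the paper: the convex-combination identity $D(P\|T_aP)=(nt)D(P\|T_{(k+1)/n}P)+(1-nt)D(P\|T_{k/n}P)$ is exactly the paper's $D(P\|T_{-a}P)=\Delta B_{k-1}+(1-\Delta)B_k$, and your boundary decomposition (bulk, upper boundary plus its mirror via $i\mapsto -i-k$, pure geometric tail) matches the paper's split at $-N,\,N-k,\,N$. The only point the paper treats more carefully is verifying that the series $\sum_i p_i|\log(p_i/p_{i+k})|$ converges (so that the convex-combination splitting is legitimate and the boundary rearrangements are valid); you should flag this summability check, but once the geometric tail is in place it is routine.
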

\begin{proof}
See Appendix~\ref{ei}.
\end{proof}

Figure~\ref{fig:cactus} shows an example of the distribution that results from the finite-dimensional optimization problem in~\eqref{finite_d_opt} with a quadratic cost. The shape of this distribution\footnote{In addition to the state of Arizona being home of several of the authors.} 
has inspired the name the ``cactus distribution.'' The following result shows that cactus mechanisms derived from the optimization problem~\eqref{finite_d_opt} are in fact globally optimal for the main optimization problem~\eqref{fe}.

\begin{theorem} \label{gk}
Denote the optimal value a cactus distribution can achieve by
\begin{equation} \label{sa}
    \KL_{\textup{Cactus}}^\star := \lim_{\varepsilon\to 0^+} \inf_{(n,N,r)\in \BN^2\times (0,1)} \KL_{n,N,r}^\star(C+\varepsilon).
\end{equation}
We have that $\KL^\star = \KL_{\textup{Cactus}}^\star$.
\end{theorem}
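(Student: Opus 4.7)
The plan is to prove the two matching inequalities $\KL^\star \le \KL_{\textup{Cactus}}^\star$ and $\KL_{\textup{Cactus}}^\star \le \KL^\star$. For the first, observe that every cactus mechanism in $\sC_{n,N,r}$ with cost at most $C+\varepsilon$ is itself a feasible additive mechanism for the relaxed problem, so $\inf_{(n,N,r)} \KL_{n,N,r}^\star(C+\varepsilon) \ge \KL^\star(C+\varepsilon)$. It then suffices to verify right-continuity of $C \mapsto \KL^\star(C)$ at the given budget, which I would establish by a compactness-plus-lower-semicontinuity argument in the spirit of the proof of Theorem~\ref{gb}: take near-optimal feasible $P_j$ with budget $C+\varepsilon_j$, $\varepsilon_j \downarrow 0$; the cost bound combined with the tail regularity $c(x)\sim \beta|x|^\alpha$ yields tightness of $\{P_j\}$; Prokhorov produces a weak subsequential limit $Q$; and lower-semicontinuity of both $P \mapsto \BE_P[c]$ and of the KL-divergence give $\BE_Q[c] \le C$ and $\sup_{|a|\le s} D(Q\|T_aQ) \le \liminf_j \sup_{|a|\le s} D(P_j\|T_aP_j)$, which is the desired right-continuity.

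For the harder direction $\KL_{\textup{Cactus}}^\star \le \KL^\star$, invoke Theorem~\ref{gb} to obtain an optimal additive $P^\star$ with density $f^\star$; by a joint-convexity argument applied to the reflection $\tilde f^\star(x) := f^\star(-x)$, I may assume $f^\star$ is symmetric about $0$ (the mixture $(f^\star+\tilde f^\star)/2$ has cost unchanged and sup-KL no larger). Given $\delta>0$, I would construct a cactus approximation in three stages: (i) mollify $f^\star$ by convolving with a narrow symmetric bump $\mu_\eta$, using the data-processing inequality to guarantee $D(f^\star*\mu_\eta \| T_a(f^\star*\mu_\eta)) \le D(f^\star \| T_a f^\star)$ while paying only $o(1)$ in extra cost as $\eta\to 0$; (ii) discretize the smoothed density onto the grid $\{\calJ_{n,i}\}$ by setting $p_i := \int_{\calJ_{n,i}} (f^\star*\mu_\eta)\,d\lambda$ for $|i|<N$; and (iii) replace the tail $|i|\ge N$ by a geometric envelope $p_N r^{|i|-N}$ that dominates the true tail cell-by-cell---possible since the finite $\alpha$-th moment of $f^\star$ forces super-polynomial decay---followed by a $(1+o(1))$ renormalization.

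The principal obstacle is showing that $\sup_{|a|\le 1} D(P_{n,r,\bp}\|T_a P_{n,r,\bp}) \to \KL^\star$ as the parameters $(n,N,r,\eta)$ are refined in the proper order. Two reductions help: Theorem~\ref{eh} collapses the sup over $a\in[-1,1]$ to a maximum over the $n$ grid-aligned shifts $a=k/n$ (this is precisely where we pay the price for piecewise-constant densities); and at each such shift the KL is a discrete sum of cell-ratio terms that, after mollification ensures the cell values converge uniformly on compact sets to a bounded continuous limit, forms a Riemann sum converging to $\int f^\star(x) \log(f^\star(x)/f^\star(x-a))\,dx$. The tail region $|x|\ge N/n$ is the subtlest part: although the geometric envelope ratio can be far from the true tail ratio, the contribution to the KL there is bounded by the tail mass times $O(\log r^{-1})$, which vanishes by choosing $N$ large enough relative to $\log r^{-1}$. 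Assembling the cost and KL estimates, for each $\delta>0$ one produces parameters $(n,N,r,\bp)$ with cost $\le C+\delta$ and KL $\le \KL^\star+\delta$; sending $\delta\to 0^+$ yields $\KL_{\textup{Cactus}}^\star \le \KL^\star$.
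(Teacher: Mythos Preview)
Your three-stage outline (mollify, discretize on the grid, replace the tail by a geometric envelope) matches the paper's architecture, but stage~(iii) as written does not go through. The assertion that ``the finite $\alpha$-th moment of $f^\star$ forces super-polynomial decay'' is false: a density such as $c_\alpha(1+|x|)^{-\alpha-2}$ has finite $\alpha$-th moment but decays only polynomially, and convolution with a compactly supported bump $\mu_\eta$ leaves the tail order unchanged. Consequently there is no choice of $p_N>0$ and $r\in(0,1)$ for which $p_N r^{|i|-N}$ dominates the true cell masses for all $|i|\ge N$, since a geometric sequence eventually falls below any polynomially decaying one. The tail-domination strategy therefore cannot produce a valid cactus approximant in general.

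The paper circumvents this not by domination but by the specific choice of mollifier: it convolves with $\psi^\sigma$ where $\psi(x)\propto e^{-|x|^\gamma}$ for a $\gamma\in(0,1/2]$ with $\gamma<\alpha$. Subadditivity of $|x|^\gamma$ yields the uniform ratio bound $\|\log(T_aq_\sigma/T_bq_\sigma)\|_{L^\infty}\le (|a-b|/\sigma)^\gamma$, which is what drives the equi-small-tail estimate for the KL integrands (your ``principal obstacle''), and it simultaneously injects a pointwise lower bound $\min_{|x|\le w}q_\sigma(x)\ge \exp(-O(w^\gamma))$ regardless of how slowly $f^\star$ itself decays. With these two ingredients the paper never needs the cactus tail to approximate or dominate the true tail; instead it chooses $p_N$ purely from the normalization constraint and shows directly that both the $q_\sigma$-tail and the cactus-tail contribute $o(1)$ to the KL and to the cost. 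Two smaller points: the paper takes $p_i:=\inf_{\calJ_{n,i}}q_\sigma/n$ rather than the cell integral, which guarantees $f_{n,r,\bp}\le q_\sigma$ on $[-w,w]$ and makes the bulk cost bound immediate; and for the easy direction the paper obtains right-continuity of $C\mapsto\KL^\star(C)$ from convexity (infimum of a jointly convex function over a convex constraint), which is shorter than your Prokhorov argument, though yours is also correct.
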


\newpage

\begin{proof}
See Appendix~\ref{ft}.
\end{proof}

\begin{remark}
The proof of Theorem~\ref{gk} gives some guidelines for choosing the parameters $(n,N,r)$. For example, optimal cactus distributions can be obtained by restricting the ratio $N/n$ (chosen sufficiently large), and choosing $r=1-\Theta_\alpha(N^{-1})$. 
\end{remark}

\begin{figure}
    \centering
    \psfrag{x}{z}
    \psfrag{p(x)}{p(z)}    
    \includegraphics[width=0.65\columnwidth]{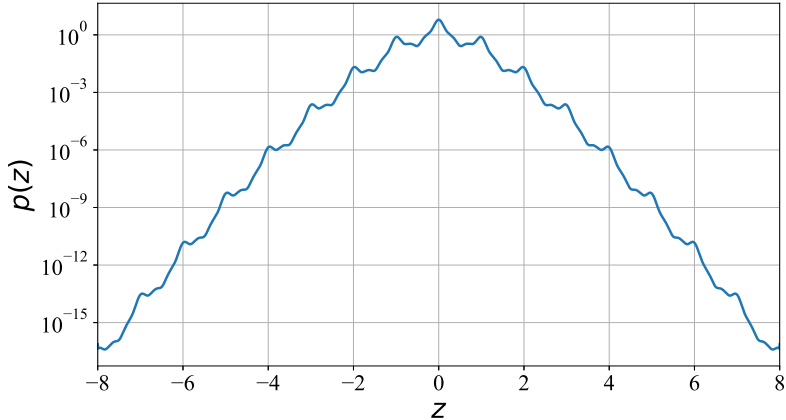}
    \caption{The optimal distribution $p(z)$, found by solving~\eqref{finite_d_opt} (and dubbed the \emph{cactus distribution}), plotted on a semi-log scale. The cost function is $c(z)=z^2$, and the parameters are: $s=1$, $C=0.25$, $n=200$, $N=1600$, and $r=0.9$.}
    \vspace{2.5mm}
    \label{fig:cactus}
\end{figure}

\section{Numerical Results}\label{sec:experiments_cactus}

We solve the optimization problem \eqref{finite_d_opt} using an interior-point method. An example of the cactus distribution for quadratic cost is shown in Figure~\ref{fig:cactus}. Figure~\ref{fig:gaussian_comparison} compares the maximal KL-divergence achieved by the cactus to that of Gaussian distributions for fixed sensitivity $s=1$ and various $\sigma$. As noted above, varying $\sigma$ with fixed $s$ is equivalent to varying $s$ with fixed $\sigma$. The KL-divergence for cactus is computed numerically, and for Gaussian mechanisms the KL-divergence is exactly $\frac{1}{2\sigma^2}$.
The cactus distribution outperforms the Gaussian distribution in terms of KL-divergence for all values of $\sigma$, although the difference decreases as $\sigma$ grows such that for larger values of $\sigma$ it is difficult to discern any gap between the curves in Figure~\ref{fig:gaussian_comparison}. (Our companion paper \cite{Fisher} gives a theoretical explanation for why Gaussian is so close to optimal as $s/\sigma$ decreases.) To illustrate that this improvement in KL-divergence leads to an improvement in $(\varepsilon,\delta)$-DP, we compute the achieved privacy via moments accountant~\cite{abadi2016deep} for each mechanism. Figure~\ref{fig:L2_comparison} shows the resulting $\varepsilon$ value as a function of the number of compositions, for fixed $\delta=10^{-3}$. Indeed, the cactus mechanism does better than Gaussian.

To give a reasonable comparison in the context of machine learning, we modified the tutorial code in TensorFlow-Privacy~\cite{tfprivacy}, which implements the DP-stochastic gradient descent (SGD) algorithm with a Gaussian  mechanism on a convolutional neural network (CNN) model. We use the training results from the original tutorial as a benchmark, then replace the Gaussian mechanism with our cactus mechanism, and train the model using the renewed setting. We select a noise level $\sigma=\sqrt{0.1}$. We test the original and modified model on a popular image dataset, MNIST, which is of size $60000$. We choose a batch-size $250$, such that each epoch consists of $240$ iterations (i.e., compositions) and the sub-sampling rate\footnote{The cactus mechanism is not optimized for subsampling. Nevertheless, we observe numerical performance of the cactus mechanism in the subsampling setting outperforming that of the Gaussian mechanism.}  is $q=250/60000\approx 0.00417$. Figure~\ref{fig:MNIST_comparisonA} shows the achieved $(\varepsilon,\delta)$-DP as computed by the moments account in this setting. Fixing $\delta=10^{-5}$, Figure~\ref{fig:MNIST_comparisonB} shows the tradeoff between privacy $\varepsilon$ and accuracy of the resulting CNN as the number of training iterations increases. One can see that for a fixed privacy budget 
(i.e., fixed $\eps$ and $\delta$), the cactus mechanism allows more training iterations and, thus, better accuracy.

\begin{figure}
    \centering
    \includegraphics[width=0.6\textwidth]{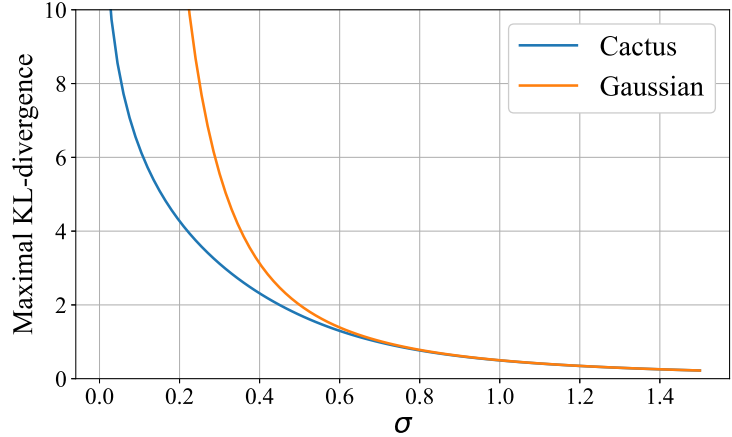}
    \caption{Achieved maximal KL-divergence $\sup_{|a|\le s} D(p\|T_ap)$ versus $\sigma$, the (quadratic) cost constraint is of the form $\mathbb{E}[Z^2]\le \sigma^2 = C$ with fixed sensitivity $s=1$.}
    \label{fig:gaussian_comparison}
\end{figure}

\begin{figure}
    \centering  
    \includegraphics[width=0.6\textwidth]{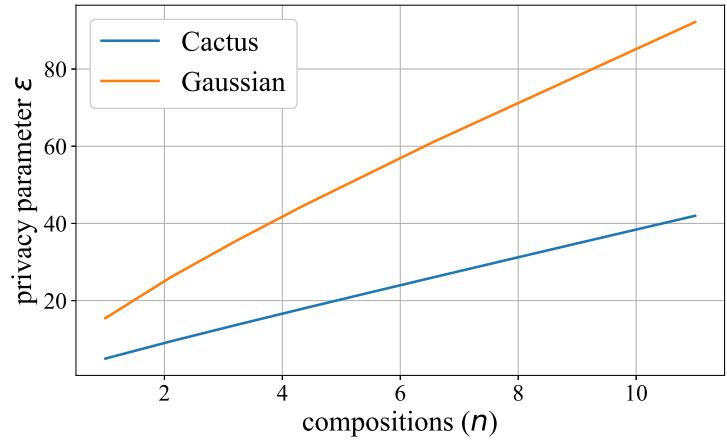}
    \caption{Privacy parameter $\eps$ versus the number of compositions, computed via the moments accountant, where $\delta=10^{-3}$, and quadratic cost $C=0.1$ with fixed sensitivity $s=1$.}
    \label{fig:L2_comparison}
\end{figure}

\begin{figure}
    \centering  
    \includegraphics[width=0.6\textwidth]{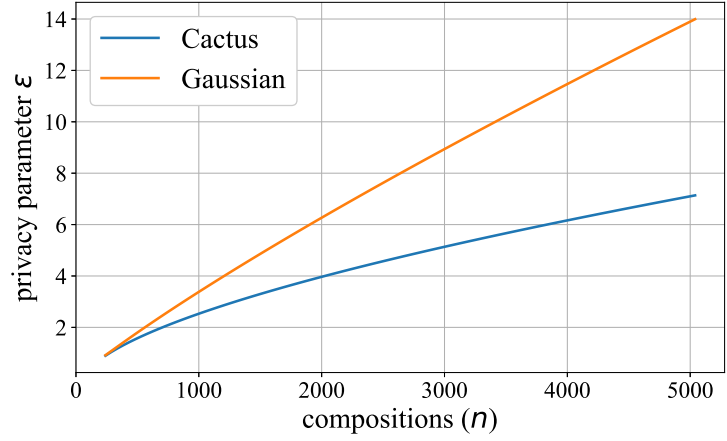}
    \caption{Privacy parameter $\eps$ versus the number of compositions, computed via the moments accountant, where $\delta=10^{-5}$, subsampling rate $q\approx 0.00417$, and quadratic cost $C=0.1$ with fixed sensitivity $s=1$.}
    \label{fig:MNIST_comparisonA}
\end{figure}

\begin{figure}
    \centering  
    \includegraphics[width=0.6\textwidth]{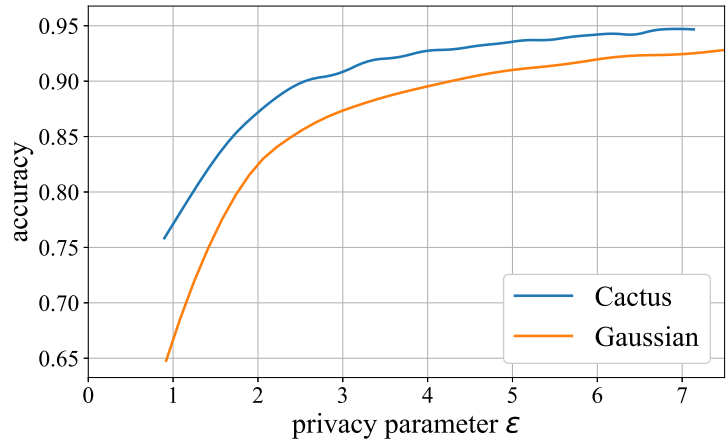}
    \caption{Model accuracy versus privacy parameter $\eps$. The settings are the same as in Figure~\ref{fig:MNIST_comparisonA} and experiment details are given in Section~\ref{sec:experiments_cactus}.}
    \label{fig:MNIST_comparisonB}
\end{figure}

\clearpage

\begin{appendices}

\section{Proof of Theorem~\ref{gb}: Optimality of Additive Continuous Channels} \label{gg}

Let $F:\sR \to [0,\infty]$ denote the objective function in~\eqref{fe}, i.e.,
\begin{equation}
    F(P_{Y|X}) :=  \sup_{|u-v|\le s} ~ D(P_{Y|X=u} \| P_{Y|X=v}).
\end{equation}
Thus, 
\begin{equation}
    \text{KL}^\star = \inf_{P_{Y|X}\in \sP} ~ F(P_{Y|X}).
\end{equation}
Fix a sequence of conditional distributions 
\begin{equation}
    \left\{P_{Y|X}^{(k)}\right\}_{k\in \BN}\subset \sP
\end{equation}
satisfying 
\begin{equation} \label{pn}
    \text{KL}^\star = \lim_{k\to \infty} F\left( P_{Y|X}^{(k)}\right).
\end{equation}
Recall that by assumption, the version of each conditional distribution $P_{Y|X}^{(k)}$ we choose is regular, i.e., $x\mapsto P_{Y|X=x}^{(k)}(B)$ is a Borel function for each Borel set $B\subset \BR$. Note that $\text{KL}^\star < \infty$ since, e.g., the Gaussian mechanism is feasible. Throwing away the first few elements in the sequence, we assume that $F\left(P_{Y|X}^{(k)}\right)<\infty$ for each $k\in \BN$. 

We break the proof down into several steps:

\begin{enumerate}
    \item Introduce Markov kernels $\overline{P}_{Y|X}^{(k)}$ as ``continuous'' convex combinations of the $P_{Y|X}^{(k)}$.
    
    \item The $\overline{P}_{Y|X}^{(k)}$ also satisfy the cost constraint.
    
    \item The $\overline{P}_{Y|X}^{(k)}$ asymptotically achieve $\text{KL}^\star$.
    
    \item The $\overline{P}_{Y|X=x}^{(k)}$ are asymptotically shifted versions $T_xP^\star$ of a fixed $P^\star \in \sB$.
    
    \item $P^\star$ achieves $\text{KL}^\star$. \label{se}
\end{enumerate}
\vspace{1mm}
\noindent $\bullet$ \underline{\emph{Step 1}}: Averaging the $P_{Y|X}^{(k)}$.

For $k\in \BN$, we will define the Markov kernel $\overline{P}_{Y|X}^{(k)}\in \sR$ by
\begin{equation} \label{sf}
    \overline{P}_{Y|X=x}^{(k)}(B) := \frac{1}{2k} \int_{-k}^k P^{(k)}_{Y|X=x+z}(B+z) \, dz.
\end{equation}
Of course, we need to check that~\eqref{sf} indeed yields a Markov kernel $\overline{P}_{Y|X}^{(k)}$. In view of Fubini's theorem, it suffices to check that the map $(x,z)\mapsto P^{(k)}_{Y|X=x+z}(B+z)$ is jointly Borel (for every fixed Borel set $B\subset \BR$). This joint measurability is not self-evident, so we check next that it indeed holds.

Let the transition probability kernel $L^{(k)}:\BR^2\times \calB(\BR)\to [0,1]$ be defined by
\begin{equation}
    L^{(k)}((x,z),A):=P_{Y|X=x+z}^{(k)}(A).
\end{equation}
Let $N^{(k)}: \BR^2 \times \calB(\BR) \to [0,1]$ denote the map
\begin{equation} \label{ph}
    N^{(k)}((x,z),B) := P^{(k)}_{Y|X=x+z}(B+z).
\end{equation}
For each $(x,z)\in \BR^2$ and Borel set $B\subset \BR$, we may write $N^{(k)}((x,z),B)$ as the integral of a nonnegative Borel function against $L((x,z), dy)$, namely,
\begin{equation} \label{pg}
    N^{(k)}((x,z),B) = \int_\BR 1_B(y-z) \, L((x,z),dy).
\end{equation}
Hence (see, e.g.,~\cite[Chapter 1, Proposition 6.9]{Erhan_book}) $(x,z) \mapsto N^{(k)}((x,z),B)$ is a Borel function. Hence, $\overline{P}_{Y|X}^{(k)}$ as given by~\eqref{sf} is indeed a well-defined Markov kernel on $\BR$.

For the next steps, we will use the following notation
\begin{align}
    R^{(k,x)}_{Y|X=z}(B) &:= P_{Y|X=x+z}^{(k)}(B+z), \label{sg} \\
    P^{(k,x)}(B) &:= \overline{P}_{Y|X=x}^{(k)}(B), \label{si} \\
    U^{(k)}(B) &:=\frac{1}{2k} \cdot \lambda(B\cap[-k,k]).
\end{align}
Note that $R^{(k,x)}_{Y|X}\in \sR$ and $P^{(k,x)}\in \sB$ for each fixed $(k,x)\in \BN\times \BR$, and~\eqref{sf} may be rewritten as
\begin{equation} \label{sh}
    P^{(k,x)} = R_{Y|X}^{(k,x)}\circ U^{(k)}.
\end{equation}

\noindent $\bullet$ \underline{\emph{Step 2}}: The $\overline{P}_{Y|X}^{(k)}$ satisfy the cost constraint.

Fix $k\in \BN$, and we will show next that $\overline{P}_{Y|X}^{(k)}\in \sP$, i.e., that $\overline{P}_{Y|X}^{(k)}$ satisfies the cost constraint. Recall that a Markov kernel $P_{Y|X}\in \sR$ belongs to $\sP$ if and only if it satisfies
\begin{equation} \label{sl}
    \sup_{x\in \BR} ~ \BE_{P_{Y|X=x}}\left[ T_x c \right] \le C.
\end{equation}
By the assumption that $P_{Y|X}^{(k)}\in \sP$, we have that
\begin{equation} \label{gp}
    \BE_{P_{Y|X=x}^{(k)}}\left[ T_xc \right] \le C
\end{equation}
for every $x\in \BR$. Shifting the variable of integration in~\eqref{gp} by a fixed constant $-z$, we obtain that
\begin{equation} \label{gq}
    \BE_{T_{-z}P_{Y|X=x}^{(k)}}\left[ T_{x-z}c \right] \le C
\end{equation}
for every $(x,z)\in \BR^2$. Replacing $x$ by $x+z$ in~\eqref{gq}, we conclude that (see~\eqref{sg})
\begin{equation} \label{pe}
    \BE_{R^{(k,x)}_{Y|X=z}}\left[ T_xc \right] \le C
\end{equation}
for every $(x,z)\in \BR^2$. We proceed via the following standard approximation by simple functions argument. 

Fix $x\in \BR$, and let $\sum_j a_j 1_{B_j}(y)$ be a nonnegative simple function upper bounded by $(T_xc)(y)$. Integrating against $R^{(k,x)}_{Y|X=z}(dy)$ we deduce from~\eqref{pe} that
\begin{equation} \label{sj}
    \sum_j a_j R^{(k,x)}_{Y|X=z}(B_j) \le C
\end{equation}
for every $z\in \BR$. Integrating~\eqref{sj} against $U^{(k)}(dz)$, and noting that $P^{(k,x)}=R^{(k,x)}_{Y|X}\circ U^{(k)}$ (see~\eqref{sh}), we deduce that 
\begin{equation} \label{pm}
    \sum_j a_j P^{(k,x)}(B_j) \le C.
\end{equation}
Now, as~\eqref{pm} holds for all nonnegative simple functions below $T_xc$, taking an increasing sequence of nonnegative simple function converging pointwise to $T_xc$ we conclude that
\begin{equation}
    \BE_{P^{(k,x)}}[T_xc] \le C.
\end{equation}
In other words (see~\eqref{si}), 
\begin{equation} \label{sk}
    \BE_{\overline{P}_{Y|X=x}^{(k)}}[T_xc] \le C.
\end{equation}
As~\eqref{sk} holds for all $x\in \BR$, we have shown that $\overline{P}_{Y|X}^{(k)}\in \sP$.

\noindent $\bullet$ \underline{\emph{Step 3}}: The $\overline{P}_{Y|X}^{(k)}$ are asymptotically optimal.

Next, we use monotonicity of the KL-divergence under conditioning (see Lemma~\ref{qb}) to show the limit
\begin{equation} \label{gr}
    \text{KL}^\star = \lim_{k\to \infty} F\left( \overline{P}_{Y|X}^{(k)} \right).
\end{equation}
Shift-invariance of the KL-divergence implies that, for each $x,x',z\in \BR$,
\begin{equation}
    D\left( R_{Y|X=z}^{(k,x)} \| R_{Y|X=z}^{(k,x')} \right) = D\left( P_{Y|X=x+z}^{(k)} \| P_{Y|X=x'+z}^{(k)} \right).
\end{equation}
Thus, as $(x+z)-(x'+z) = x-x'$, we conclude that
\begin{equation}
    \sup_{\substack{|x-x'|\le s \\
    z\in \BR}} D\left( R_{Y|X=z}^{(k,x)} \| R_{Y|X=z}^{(k,x')} \right) = F\left( P_{Y|X}^{(k)} \right)
\end{equation}
By assumption of optimality of the $P_{Y|X}^{(k)}$ (see~\eqref{pn}), there exists a $k_0$  such that for all $k\ge k_0$,
\begin{equation} \label{oy}
    \sup_{\substack{|x-x'|\le s \\
    z\in \BR}} D\left( R_{Y|X=z}^{(k,x)} \| R_{Y|X=z}^{(k,x')} \right) \le \text{KL}^\star + \delta.
\end{equation}
By definition of KL-divergence, we infer $R_{Y|X=z}^{(k,x)} \ll R_{Y|X=z}^{(k,x')}$ for all $z\in \BR$ and $|x-x'|\le s$. Also,~\eqref{oy} shows in particular that
\begin{equation}
    \sup_{|x-x'|\le s} \BE_{\xi \sim U^{(k)}} \left[ D\left( R_{Y|X=\xi}^{(k,x)} \| R_{Y|X=\xi}^{(k,x')} \right) \right] \le \text{KL}^\star + \delta.
\end{equation}
Using~\eqref{sh}, Lemma~\ref{qb} yields that
\begin{equation}
    \sup_{|x-x'|\le s} D\left( P^{(k,x)} \| P^{(k,x')} \right) \le \text{KL}^\star + \delta.
\end{equation}
Taking $\delta\to 0^+$, we see that~\eqref{gr} holds.

\vspace{1mm}

\noindent $\bullet$ \underline{\emph{Step 4}}: $P^{(k,x)}$ is asymptotically $T_xP^\star$ for a fixed $P^\star$.

Next, we show that there is a measure $P^\star\in \sB$ such that, for every $x\in \BR$, we have the weak convergence
\begin{equation}
    P^{(k,x)} \to T_x P^\star
\end{equation}
as $k\to \infty$. 

First, for each fixed $x\in \BR$, we establish the total-variation distance convergence
\begin{equation} \label{gw}
    \lim_{k\to \infty} \left\| P^{(k,x)} - T_xP^{(k,0)} \right\|_{\text{TV}} = 0.
\end{equation}
We may write
\begin{equation}
    \left(T_xP^{(k,0)}\right)(B) = \frac{1}{2k} \int_{-k-x}^{k-x} R_{Y|X=z}^{(k,x)}(B) \, dz.
\end{equation}
Therefore, for any Borel set $B\subset \BR$ we have that 
\begin{align}
    \left| P^{(k,x)}(B) - T_xP^{(k,0)}(B) \right|  \le \frac{1}{2k}\int_{[-k,k]\Delta [-k-x,k-x]} R_{Y|X=z}^{(k,x)}(B) \, dz \le \frac{|x|}{k}, \label{gv}
\end{align}
where $\Delta$ denotes the symmetric difference. As the bound~\eqref{gv} is uniform in $B$, we conclude that the total-variation limit in~\eqref{gw} holds.

The next ingredient we need is that the set $\{P^{(k,0)}\}_{k\in \BN} \subset \sB$ is tight, i.e., that for any $\varepsilon>0$ there exists an $n>0$ such that
\begin{equation} \label{gx}
    \sup_{k\in \BN} ~ P^{(k,0)}(\BR\setminus [-n,n]) \le \varepsilon.
\end{equation}
Fix $\varepsilon>0$. By the assumption that $c(x)\sim\beta |x|^\alpha$ where $\alpha,\beta>0$, we have $\lim_{|x|\to\infty} c(x)=\infty$. Thus there exists an integer $n$ such that $c(x)\ge C/\varepsilon$ whenever $|x|\ge n.$ Then, for each $(z,k)\in \BR\times \BN$,
\begin{align}
    R_{Y|X=z}^{(k,0)}\left( \BR \setminus [-n,n] \right) &= P_{Y|X=z}^{(k)}\left( \BR \setminus [-n+z,n+z] \right) \\
    &\le \int_{\BR \setminus [-n+z,n+z]} \frac{c(y-z)}{c(n)} \, dP_{Y|X=z}^{(k)}(y) \label{sm} \\
    &\le c(n)^{-1} \BE_{P_{Y|X=z}^{(k)}}[T_zc] \label{sn} \\
    &\le c(n)^{-1} \cdot C \label{so} \\
    &\le \varepsilon,
\end{align}
where~\eqref{sm} follows by monotonicity of $c$,~\eqref{sn} by nonnegativity of $c$, and~\eqref{so} since $P_{Y|X}^{(k)}\in \sP$. Hence, 
\begin{equation}
    \sup_{(z,k)\in \BR\times \BN} ~ R_{Y|X=z}^{(k,0)}\left(\BR\setminus [-n,n]\right) \le \varepsilon.
\end{equation}
Averaging over $z$, we deduce that~\eqref{gx} holds, i.e., that $\{P^{(k,0)}\}_{k\in \BN}$ is tight.

By tightness of $\{P^{(k,0)}\}_{k\in \BN}$, we conclude via Prokhorov's theorem~\cite[Chapter 3, Theorem 5.13]{Erhan_book} after passing to a subsequence that there is a $P^\star \in \sB$ such that $P^{(k,0)}\to P^\star$ weakly as $k\to \infty$, i.e., for every continuous and bounded function $f:\BR\to \BR$ we have
\begin{equation}
    \lim_{k\to \infty} \BE_{P^{(k,0)}}[f] = \BE_{P^\star}[f].
\end{equation}
This immediately implies that, for each $x\in \BR$, we also have 
\begin{equation} \label{gy}
    T_xP^{(k,0)} \to T_x P^\star
\end{equation}
weakly as $k\to \infty$. As convergence in total variation is stronger than weak convergence, we conclude from~\eqref{gw} and~\eqref{gy} that for every $x\in \BR$
\begin{equation} \label{gz}
    P^{(k,x)} \to T_x P^\star
\end{equation}
weakly as $k\to \infty$. \\

\noindent $\bullet$ \underline{\emph{Step 5}}: The additive mechanism $P^\star$ is optimal.

The final step is showing that $P^\star$ attains $\text{KL}^\star$ and satisfies the cost constraint. By joint lower-semicontinuity of the KL-divergence~\cite[Theorem 1]{posner}, we deduce from~\eqref{gz} that for each $x\in \BR$ 
\begin{equation} \label{ha}
    D(P^\star \| T_x P^\star) \le \liminf_{k\to \infty} D\left( P^{(k,0)} \| P^{(k,x)} \right).
\end{equation}
But we also have
\begin{equation}
    \sup_{|x|\le s} D\left( P^{(k,0)} \| P^{(k,x)} \right) \le F\left( \overline{P}_{Y|X}^{(k)} \right).
\end{equation}
Therefore, taking the supremum over $|x|\le s$ in~\eqref{ha}, we infer from~\eqref{gr} that
\begin{equation} \label{hc}
    \sup_{|x|\le s} D(P^\star \| T_x P^\star) \le \text{KL}^\star.
\end{equation}
Hence, it only remains to check that $P^\star \in \sP_{\text{add}}$ for us to conclude that equality holds in~\eqref{hc}.

For every $A>0$ and $x\in \BR$, the function $1_{[-A,A]}\cdot T_xc$ is continuous and bounded. Hence, the weak convergence $P^{(k,x)} \to T_xP^\star$ yields
\begin{equation} \label{hb}
    \BE_{T_xP^\star}\left[1_{[-A,A]}\cdot T_xc\right] = \lim_{k\to \infty} \BE_{P^{(k,x)}}\left[1_{[-A,A]}\cdot T_xc\right].
\end{equation}
As $\overline{P}_{Y|X}^{(k)} \in \sP$, nonnegativity of $c$ implies in view of~\eqref{hb} that
\begin{equation}
    \BE_{T_xP^\star}\left[1_{[-A,A]}\cdot T_xc\right] \le C.
\end{equation}
By the monotone convergence theorem, taking $A\to \infty$ yields
\begin{equation}
    \BE_{T_x P^\star} \left[ T_x c \right] \le C,
\end{equation}
In other words, $P^\star \in \sP_{\text{add}}$. Therefore, we must have
\begin{equation}
   \text{KL}^\star \le \text{KL}_{\text{add}}^\star \le \sup_{|x|\le s} ~ D(P^\star \| T_x P^\star) .
\end{equation}
Combining this inequality with~\eqref{hc}, we conclude that
\begin{equation} \label{qe}
   \text{KL}^\star = \text{KL}_{\text{add}}^\star = \sup_{|x|\le s} ~ D(P^\star \| T_x P^\star) .
\end{equation}
This completes the proof of the first statement of the theorem.

For the last statement of the theorem, we show that the relation $\mu \ll T_x \mu$ for every $|x|\le s$ (which holds for $P^\star$ by~\eqref{qe} and $\text{KL}^\star<\infty$) is enough to conclude that $\mu \ll \lambda$. Fix a Borel set $B\subset \BR$ such that $\lambda(B)=0$, and we will show that $\mu(B)=0$. Note that the function $x\mapsto (T_x \mu)(B)$ is Borel as it is given by the convolution $1_B \ast \eta$ where $\eta(A):=\mu(-A)$. Then, by Tonelli's theorem and translation-invariance of the Lebesgue measure,
\begin{align}
    \int_{\BR} (T_x\mu)(B) \, d\lambda(x) &= \int_{\BR^2} 1_{B-x}(b) \, d\mu(b) \, d\lambda(x) \\
    &= \int_{\BR^2} 1_{B-x}(b)  \, d\lambda(x) \, d\mu(b) \\
    &= \int_{\BR^2} 1_{B-b}(x) \, d\lambda(x) \, d\mu(b) \\
    &= \int_\BR (T_b\lambda)(B) \, d\mu(b) \\
    &= \int_\BR \lambda(B) \, d\mu(b) = 0.
\end{align}
Thus, $(T_x \mu)(B)=0$ for $\lambda$-almost every $x$. In particular, $(T_x \mu)(B)=0$ for at least one $x\in [-s,s]$. Thus, $\mu \ll T_x \mu$ implies $\mu(B)=0$, and the proof is complete.

\begin{remark}
The lemma stated below, showing that conditioning increases divergence, is a well-known fact. It is shown in the literature under various assumptions on the underlying distributions (see, e.g.,~\cite[Theorem~2.2 and Section~2.6]{Polyanskiy2019}). We use it in the proof of Theorem~\ref{gb} in the specific situation where one of the conditional distributions is absolutely continuous with respect to the other for each individual input. As in~\cite[Remark~2.4]{Polyanskiy2019}, Doob's version of the Radon-Nikodym theorem can be used to derive that conditioning increases divergence in our case. For completeness, we add a proof of this lemma here.
\end{remark}

\begin{lemma}[Conditioning increases divergence] \label{qb}
Let $P_{Y|X},P_{Y|X}'$ be Markov kernels on $\BR$ such that $P_{Y|X=x} \ll P_{Y|X=x}'$ for every $x\in \BR$. Then, denoting the marginalizations of $P_{X,Y}:=P_{Y|X}\otimes P_X, P_{X,Y}':=P_{Y|X}'\otimes P_X$ in the second coordinate by $P_Y,P_Y'$, we have that
\begin{equation} \label{qa}
    D\left( P_Y \| P_Y' \right) \le \BE_{\xi \sim P_X}\left[ D\left( P_{Y|X=\xi} \| P_{Y|X=\xi}' \right) \right].
\end{equation}
\end{lemma}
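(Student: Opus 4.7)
The plan is to prove the inequality directly via Jensen's inequality applied to the convex function $\phi(u) := u \log u$, after establishing the Radon-Nikodym derivative of $P_Y$ with respect to $P_Y'$ in terms of a conditional expectation of the derivatives of $P_{Y|X=x}$ with respect to $P_{Y|X=x}'$.

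First, I would invoke Doob's version of the Radon-Nikodym theorem (as hinted in the preceding remark) to select a jointly Borel measurable function $\rho:\BR^2 \to [0,\infty)$ such that, for each $x\in\BR$, $\rho(x,\cdot)$ serves as a version of $dP_{Y|X=x}/dP_{Y|X=x}'$. Next, since $\BR$ is Polish, I would disintegrate $P_{X,Y}'$ in the other direction, producing a regular conditional distribution $P_{X|Y}'$ such that $P_{X,Y}' = P_Y' \otimes P_{X|Y}'$. Using this disintegration together with the definition of $\rho$, for every Borel $B\subset \BR$,
\begin{equation}
P_Y(B) = \int_{\BR} P_{Y|X=x}(B) \, dP_X(x) = \int_{\BR \times B} \rho(x,y) \, dP_{X,Y}'(x,y) = \int_B \left( \int_{\BR} \rho(x,y) \, dP_{X|Y=y}'(x) \right) dP_Y'(y),
\end{equation}
so that $P_Y \ll P_Y'$ with Radon-Nikodym derivative $g(y) := \BE_{\xi \sim P_{X|Y=y}'}[\rho(\xi,y)]$.

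Then the bound follows by a one-line Jensen step. The function $\phi(u)=u\log u$ is convex on $[0,\infty)$ (with the convention $\phi(0)=0$), so for $P_Y'$-almost every $y$,
\begin{equation}
\phi(g(y)) \;=\; \phi\!\left( \BE_{\xi \sim P_{X|Y=y}'}[\rho(\xi,y)] \right) \;\le\; \BE_{\xi \sim P_{X|Y=y}'}[\phi(\rho(\xi,y))].
\end{equation}
Integrating against $P_Y'(dy)$ and unfolding the disintegration on the right,
\begin{equation}
D(P_Y \| P_Y') \;=\; \int \phi(g(y)) \, dP_Y'(y) \;\le\; \int \phi(\rho(x,y)) \, dP_{X,Y}'(x,y) \;=\; \int D(P_{Y|X=x} \| P_{Y|X=x}') \, dP_X(x),
\end{equation}
which is exactly \eqref{qa}. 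If the right-hand side is $+\infty$ the claim is trivial, so Jensen's inequality is only needed where the integrals are finite; the standard truncation argument ($\rho\wedge M$, then $M\to\infty$ via monotone convergence) handles potential integrability issues on the left.

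The main technical obstacle is the joint measurability of $\rho(x,y)$ and the existence of the regular conditional $P_{X|Y}'$; both are standard consequences of $\BR$ being a Polish space (and are precisely what the preceding remark alludes to via Doob's Radon-Nikodym theorem), but they are the only non-elementary ingredients. Once these measure-theoretic underpinnings are in place, the argument is a short application of convexity and Fubini.
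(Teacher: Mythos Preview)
Your proposal is correct and follows essentially the same route as the paper's proof: both invoke Doob's Radon--Nikodym theorem for a jointly measurable density $\rho$, disintegrate $P_{X,Y}'$ via a regular conditional $P_{X|Y}'$ to identify $dP_Y/dP_Y'$ as the conditional expectation of $\rho$, and then apply Jensen's inequality to $u\mapsto u\log u$. The only cosmetic difference is that the paper explicitly passes through $D(P_{X,Y}\|P_{X,Y}')$ as an intermediate quantity, whereas you collapse that step; the underlying computations are identical.
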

\begin{proof}
Since by assumption $P_{Y|X=x}\ll P_{Y|X=x}'$ for every $x\in \BR$, a generalization of the Radon-Nikodym theorem by Doob (see~\cite[Chapter 5, Theorem 4.44]{Erhan_book}) yields the existence of a version of the Radon-Nikodym derivatives $dP_{Y|X=x}/dP_{Y|X=x}'$ such that the function
\begin{equation}
    (x,y) \mapsto \frac{dP_{Y|X=x}}{dP_{Y|X=x}'}(y)
\end{equation}
is jointly measurable. We show that this function is a version of $dP_{X,Y}/dP_{X,Y}'$. First, note that $P_{X,Y}\ll P_{X,Y}'$ are equivalent. Indeed, for any Borel set $E\subset \BR$, denoting the sections by $E_x:=\{y\in \BR ~ ; ~ (x,y)\in E\}$, we have that $P_{X,Y}(E)=0$ if and only if $P_{Y|X=x}(E_x) = 0$ for $P_X$-a.e. $x$, and a similar statement holds for $P_{X,Y}'$. By assumption, $P_{Y|X=x}\ll P_{Y|X=x}'$ for each $x$, so we obtain $P_{X,Y} \ll P_{X,Y}'$. By joint measurability and nonnegativity, using the disintegration theorem (see, e.g.,~\cite[Chapter 1, Theorem 6.11]{Erhan_book}) we obtain that for any Borel $E\subset \BR^2$
\begin{align}
    \int_{E} \frac{dP_{Y|X=x}}{dP_{Y|X=x}'}(y) \, dP_{X,Y}'(x,y) & = \int_\BR \int_{E_x} \frac{dP_{Y|X=x}}{dP_{Y|X=x}'}(y) \, dP_{Y|X=x}'(y) \, dP_X(x) \\
    &  = \int_\BR \int_{E_x} dP_{Y|X=x}(y) \, dP_X(x) \\
    & = P_{X,Y}(E).
\end{align}
Thus, we have the equality
\begin{equation} \label{pw}
    \frac{dP_{X,Y}}{dP_{X,Y}'}(x,y) = \frac{dP_{Y|X=x}}{dP_{Y|X=x}'}(y)
\end{equation}
for $P_{X,Y}'$-a.e. $(x,y)$. 

Define $f:[0,\infty) \to [-1/e,\infty)$ by $f(0)=0$ and $f(t) = t\log t$ for $t>0$. By the disintegration theorem and~\eqref{pw}, we have the equality
\begin{align}
    D\left( P_{X,Y} \| P_{X,Y}' \right) & = \int_{\BR^2} f\left( \frac{dP_{X,Y}}{dP_{X,Y}'} \right) \, dP_{X,Y}' \\
    & = \int_\BR \int_\BR f\left( \frac{dP_{Y|X=x}}{dP_{Y|X=x}'}(y) \right) \, dP_{Y|X=x}' \, dP_X(x) \\
    & = \BE_{\xi \sim P_X}\left[ D\left( P_{Y|X=\xi} \| P_{Y|X=\xi}' \right) \right]. \label{pz}
\end{align}

On the other hand, disintegration with respect to $Y$ yields the following bound. Denote by $P_{X|Y},P_{X|Y}'$ the disintegrations of $P_{X,Y},P_{X,Y}'$ with respect to $P_Y,P_Y'$. In particular, $P_{X|Y}$ and $P_{X|Y}'$ are Markov kernels on $\BR$. By the disintegration theorem and Jensen's inequality,
\begin{align}
    D\left( P_{X,Y} \| P_{X,Y}' \right) &= \int_{\BR^2} f\left( \frac{dP_{X,Y}}{dP_{X,Y}'} \right) \, dP_{X,Y}' \\
    &= \int_\BR \int_\BR f\left( \frac{dP_{X,Y}}{dP_{X,Y}'} (x,y) \right) \, dP_{X|Y=x}'(x) \, dP_Y'(y) \\
    &\ge \int_\BR f\left( g(y) \right) \, dP_Y'(y) \label{px}
\end{align}
where 
\begin{equation}
    g(y) :=   \int_\BR \frac{dP_{X,Y}}{dP_{X,Y}'} (x,y) \, dP_{X|Y=x}'(x).
\end{equation}
For this application of Jensen's inequality, we use the fact, shown next, that $g$ is finite $P_Y'$-a.e. In fact, we show that $g$ is a version of $dP_Y/dP_Y'$. Note that $P_{X,Y}\ll P_{X,Y}'$ implies that $P_Y\ll P_Y'$. Now, for any Borel $B\subset \BR$, the disintegration theorem yields that
\begin{align}
    \int_B g \, dP_Y' &=  \int_B \int_\BR \frac{dP_{X,Y}}{dP_{X,Y}'} (x,y) \, dP_{X|Y=x}'(x) \, dP_Y'(y) \\
    &= \int_{\BR \times B} \frac{dP_{X,Y}}{dP_{X,Y}'} \, dP_{X,Y}' \\
    &= P_{X,Y}(\BR \times B) = P_Y(B).
\end{align}
Thus, we have that
\begin{equation}
    g(y) = \frac{dP_Y}{dP_Y'}(y).
\end{equation}
for $P_Y'$-a.e. $y$. Hence, we obtain from inequality~\eqref{px} that
\begin{equation} \label{py}
    D\left( P_{X,Y} \| P_{X,Y}' \right) \ge D\left( P_Y \| P_Y' \right).
\end{equation}
Combining inequality~\eqref{py} and equation~\eqref{pz} we obtain the desired inequality~\eqref{qa}.
\end{proof}

% \section{Proofs of Section~\ref{ej}} \label{fu}

\section{Proof of Theorem~\ref{eh}: Finite-Dimensionality} \label{ei}

Note that the vector $\bp$ only includes $p_i$ for $0\le i\le N$. We will simplify our analysis by defining $p_i$ for all integers $i$. Specifically, for $i\in \BZ \setminus \{0,\cdots,N\}$, we denote
\begin{equation} \label{p_i_def}
    p_i := \left\{ \begin{array}{cl} 
    p_{|i|}, & \text{if } -N\le i \le -1, \vspace{2mm} \\
    p_Nr^{|i|-N}, & \text{if } |i|>N.
    \end{array} \right.
\end{equation}
Thus we may rewrite the formula for $f_{n,r,\bp}$ in~\eqref{continuous_mechanism} as
\begin{equation}\label{mech_quant}
f_{n,r,\bp}(x)=np_i \quad \text{ if } x\in \calJ_{n,i}.
\end{equation}
We show first that
\begin{equation} \label{hh}
\sup_{a\in\mathbb{R}:|a|\le 1} D(P_{n,r,\bp}\|T_aP_{n,r,\bp})
= \max_{k\in\mathbb{Z}:|k|\le n} \sum_{i\in\BZ} p_i\log \frac{p_i}{p_{i+k}},
\end{equation}
then we show that this formula is equal to the objective function in~\eqref{finite_d_opt}. For convenience, we drop the subscripts on $f_{n,r,\bp}$ and $P_{n,r,\bp}$ throughout this proof. We may assume $\bp>\mathbf{0}$, since any vector $\bp$ with some zero coordinate will be infeasible in both optimization problems~\eqref{gj} and~\eqref{finite_d_opt}.

Fix $a\in [-1,1]$. For each $i \in \BZ$, let $\calJ_{n,i}^{\circ} = \left( \frac{i-1/2}{n}, \frac{i+1/2}{n} \right)$ denote the interior of $\calJ_{n,i}$. We start by showing that the function
\begin{equation}
    F_a := f\log \frac{f}{T_{-a}f}
\end{equation}
is integrable, which would allow us to use countable additivity of the Lebesgue integral to split $D(P\|T_{-a}P)$ into a sum of integrals over the $\calJ_{n,i}^{\circ}$. Let $k\in \BZ$ be the unique integer such that $a+\frac{1}{2n} \in \calJ_{n,k}$, and denote $\Delta:=k-an$. From
\begin{equation}
    \frac{k-1/2}{n} \le a +\frac{1}{2n} \le \frac{k+1/2}{n},
\end{equation}
we conclude that $0\le \Delta \le 1$. Consider an integer $i$ and a real $x\in \calJ_{n,i}^\circ$. If $x<(i-1/2+\Delta)/n$, then 
\begin{equation} \label{hd}
    x+a = x + \frac{k-\Delta}{n} < \frac{i+k-1/2}{n} = \frac{(i+k-1)+1/2}{n}
\end{equation}
and, since $\Delta \le 1$,
\begin{equation} \label{he}
    x+a = x + \frac{k-\Delta}{n} > \frac{i-1/2}{n} + \frac{k-1}{n} = \frac{(i+k-1)-1/2}{n}.
\end{equation}
Inequalities~\eqref{hd} and~\eqref{he} together imply that $x+a\in \calJ_{n,i+k-1}^\circ$. Similarly, if $x>(i-1/2+\Delta)/n$ then $x+a\in \calJ_{n,i+k}^\circ$. We may ignore the countably many cases $x=(i-1/2+\Delta)/n$ (as $i$ varies over $\BZ$) for the sake of integrating $F_a$. We conclude that for every $x\in \BR$ such that $nx-\Delta+\frac12$ is not an integer,
\begin{equation}
    F_a(x) = \left\{ \begin{array}{cl}
    np_i \log \frac{p_i}{p_{i+k-1}}, &\text{if } ~ x\in \calJ_{n,i}, ~~ x<\frac{i-1/2+\Delta}{n}, \\
    np_i \log \frac{p_i}{p_{i+k}}, & \text{if } ~ x\in \calJ_{n,i}, ~~ x>\frac{i-1/2+\Delta}{n}.
    \end{array} \right.
\end{equation}
Since $\int_{\BR} |F_a| = \sum_{i\in \BZ} \int_{\calJ_{n,i}^\circ} |F_a|$, we obtain
\begin{equation} \label{hf}
    \int_{\BR} |F_a| = \sum_{i\in \BZ} p_i \left( \Delta \left| \log \frac{p_i}{p_{i+k-1}} \right| + (1-\Delta)\left| \log \frac{p_i}{p_{i+k}} \right| \right). 
\end{equation}
Now, we may conclude that $F_a\in L^1(\BR)$ by comparison with a geometric series. Indeed, we show the convergence of the series 
\begin{equation}
    S_\ell := \sum_{i\in \BZ} p_i \left| \log \frac{p_i}{p_{i+\ell}} \right|
\end{equation}
for each fixed $\ell \in \BZ$. Consider the set of indices
\begin{equation}
    I = \BZ \setminus \{-N-|\ell|,\cdots,N+|\ell|\},
\end{equation}
and note that for each $i\in I$ we have $p_{i+j} = p_N r^{|i+j|-N}$ for both values $j\in \{0,\ell\}$. In particular, for $i\in I$ we have that
\begin{equation}
    \left| \log \frac{p_i}{p_{i+\ell}} \right| = \left| |i|-|i+\ell| \right| \cdot \log\frac{1}{r} \le |\ell| \cdot \log\frac{1}{r}.
\end{equation}
Therefore, we obtain the bound
\begin{equation}
    S_\ell \le \frac{|\ell| p_N \log\frac{1}{r}}{r^N}\cdot \frac{1+r}{1-r} + \sum_{|i|\le N+ |\ell|} p_i \left| \log \frac{p_i}{p_{i+\ell}} \right| < \infty.
\end{equation}
As $S_k$ and $S_{k-1}$ are both finite, we conclude from~\eqref{hf} that $F_a\in L^1(\BR)$. Therefore, by countable additivity,
\begin{equation}
    D(P\|T_{-a}P)=\sum_{i\in \BZ} \int_{\calJ_{n,i}^\circ} F_a,
\end{equation}
i.e.,
\begin{equation} \label{hg}
    D(P\|T_{-a}P) =  \sum_{i\in \BZ} p_i \left( \Delta  \log \frac{p_i}{p_{i+k-1}}  + (1-\Delta)\log \frac{p_i}{p_{i+k}} \right).
\end{equation}

Let $B_\ell$ denote the same sum as $S_\ell$ but without the absolute value sign,
\begin{equation}
    B_\ell := \sum_{i\in \BZ} p_i  \log \frac{p_i}{p_{i+\ell}}.
\end{equation}
Finiteness of the $S_\ell$ yields from~\eqref{hg} that
\begin{equation} \label{hi}
    D(P\|T_{-a}P) = \Delta B_{k-1} + (1-\Delta) B_k.
\end{equation}
Also, the relation we are aiming to prove~\eqref{hh} can be restated as
\begin{equation}
    \sup_{|d|\le 1} D(P\|T_dP) = \max_{|\ell| \le n} B_\ell.
\end{equation}
We deduce from $k=an+\Delta$,  $|a|\le 1$, and $0\le \Delta \le 1$ that we must have $-n\le k \le n+1$. If it holds that $-n+1 \le k \le n$, then what we have shown in~\eqref{hi} implies, in view of $0\le \Delta \le 1$, that
\begin{equation}
    D(P\|T_{-a}P) \le \max_{|\ell|\le n} B_\ell.
\end{equation}
We treat the remaining two extreme cases $k\in \{-n,n+1\}$ separately. First, if $k=-n$ then $\Delta=0$, in which case
\begin{equation}
    D(P\| T_{-a}P) = B_{-n} \le  \max_{|\ell|\le n} B_\ell.
\end{equation}
Second, if $k=n+1$ then $\Delta=1$, in which case
\begin{equation}
    D(P\| T_{-a}P) = B_{n} \le  \max_{|\ell|\le n} B_\ell.
\end{equation}
Combining all cases, we conclude that
\begin{equation} \label{hj}
    \sup_{|d|\le 1} D(P\| T_{d}P)  \le  \max_{|\ell|\le n} B_\ell.
\end{equation}

We establish now that the reverse inequality in~\eqref{hj} also holds. Let $\ell \in \{0,\cdots,n\}$. The shift $a_\ell:=\ell/n$ satisfies $|a_\ell|\le 1$ and $a_\ell + \frac{1}{2n} \in \calJ_{n,\ell}$. Also, $\Delta_\ell := \ell - a_\ell n = 0$. Therefore, we conclude from~\eqref{hi} that
\begin{equation} \label{hn}
    D(P\|T_{-a_\ell}P) = B_\ell.
\end{equation}
This shows that
\begin{equation} \label{hk}
    \sup_{|d|\le 1} D(P\|T_dP) \ge \max_{0\le \ell \le n} B_\ell.
\end{equation}
In addition, consider $\ell\in \{-n,\cdots,-1\}$ and the shift $a_\ell':=\ell/n$. Then, in this case $a_\ell' + \frac{1}{2n} \in \calJ_{n,\ell+1}$. Also, $\Delta_\ell' := (\ell+1)-a_\ell'n = 1$. Thus, by~\eqref{hi}, we have that
\begin{equation} \label{ho}
    D(P\|T_{-a_\ell'}P) = B_{(\ell+1)-1} = B_\ell.
\end{equation}
Therefore,
\begin{equation} \label{hl}
    \sup_{|d|\le 1} D(P\|T_{-d}P) \ge \max_{-n\le \ell \le -1} B_\ell.
\end{equation}
Combining~\eqref{hk} and~\eqref{hl}, we conclude that
\begin{equation} \label{hm}
    \sup_{|d|\le 1} D(P\|T_{-d}P) \ge \max_{|\ell| \le n} B_\ell.
\end{equation}
Inequality~\eqref{hm} together with the reverse inequality~\eqref{hj} yield that the desired equation~\eqref{hh} holds, i.e.,
\begin{equation}
\sup_{|a|\le 1} ~ D(P\|T_aP)
= \max_{|k|\le n} ~ \sum_{i\in\mathbb{Z}} p_i\log \frac{p_i}{p_{i+k}}.
\end{equation}

Next, we show that the expression 
\begin{equation}
    \max_{|k|\le n} ~ \sum_{i\in\mathbb{Z}} p_i\log \frac{p_i}{p_{i+k}}
\end{equation}
reduces to the form given in the statement of the theorem. By construction, $p_i = p_{-i}$ for each $i\in \BZ$. Thus, we have for each $k\in \BZ$
\begin{align}
    B_k &= \sum_{i\in\mathbb{Z}} p_i\log \frac{p_i}{p_{i+k}} = \sum_{j\in\mathbb{Z}} p_{-j}\log \frac{p_{-j}}{p_{-j+k}} = \sum_{j\in\mathbb{Z}} p_{j}\log \frac{p_{j}}{p_{j-k}} = B_{-k}.
\end{align}
Therefore, $B_k = (B_k+B_{-k})/2$ for every $k\in \BZ$. Note that this is a symmetric expression in $k$. As $B_0=0$, the KL-divergence is nonnegative, and $B_k\ge 0$ for every $|k|\le n$ (see~\eqref{hn} and~\eqref{ho}), we conclude that
\begin{equation} \label{hp}
    \sup_{|a|\le 1} ~ D(P\|T_aP) = \max_{1\le k \le n} ~ \frac{1}{2}(B_k+B_{-k}).
\end{equation}
We now rewrite~\eqref{hp} in terms of $p_i$ for only $0\le i\le N$, by taking advantage of~\eqref{p_i_def}. Fix $k\in \{1,\cdots,n\}$. We may write
\begin{equation}
    B_{-k} = \sum_{j\in \BZ} p_j \log \frac{p_j}{p_{j-k}} = \sum_{i\in \BZ} p_{i+k} \log \frac{p_{i+k}}{p_i},
\end{equation}
so
\begin{equation}
    B_k+B_{-k} = \sum_{i\in \BZ} (p_i-p_{i+k}) \log \frac{p_i}{p_{i+k}}.
\end{equation}
We split this sum at the points $-N,N-k,$ and $N$.  For any $k\in\{1,\ldots,n\}$, using the assumption that $n<N$, we may write
\begin{multline}\label{three_terms}
\sum_{i\in\mathbb{Z}} (p_i-p_{i+k})\log\frac{p_i}{p_{i+k}}
=\sum_{i=-N+1}^{N-k-1} (p_{|i|}-p_{|i+k|})\log \frac{p_{|i|}}{p_{|i+k|}}
\\+\sum_{i=N-k}^\infty (p_i-p_{i+k})\log\frac{p_i}{p_{i+k}}
+\sum_{i=-\infty}^{-N} (p_i-p_{i+k})\log\frac{p_i}{p_{i+k}}.
\end{multline}
In fact, the third term in \eqref{three_terms} is identical to the second. This is proved by
\begin{align}
\sum_{i=-\infty}^{-N} (p_i-p_{i+k})\log\frac{p_i}{p_{i+k}}
&=\sum_{i=N}^\infty (p_{-i}-p_{-i+k})\log \frac{p_{-i}}{p_{-i+k}}
\\&=\sum_{i=N}^\infty (p_{i}-p_{i-k})\log \frac{p_{i}}{p_{i-k}}
\\&=\sum_{i=N-k}^\infty (p_{i+k}-p_{i})\log\frac{p_{i+k}}{p_{i}}
\\&=\sum_{i=N-k}^\infty (p_i-p_{i+k})\log\frac{p_i}{p_{i+k}}.
\end{align}
Moreover, we may rewrite this expression as
\begin{align}
\sum_{i=N-k}^\infty &(p_i-p_{i+k})\log\frac{p_i}{p_{i+k}} \nonumber \\
&=\sum_{i=N-k}^{N-1} (p_i-p_Nr^{i+k-N})\log \frac{p_i}{p_Nr^{i+k-N}}+\sum_{i=N}^\infty (p_N r^{i-N}-p_N r^{i+k-N})\log \frac{p_N r^{i-N}}{p_N r^{i+k-N}}
\\
&=\sum_{i=N-k}^{N-1} (p_i-p_Nr^{i+k-N})\log \frac{p_i}{p_Nr^{i+k-N}} + p_N\sum_{i=N}^\infty  r^{i-N} (1- r^{k})\log r^{-k}
\\
&=\sum_{i=N-k}^{N-1} (p_i-p_Nr^{i+k-N})\log \frac{p_i}{p_Nr^{i+k-N}} +p_N\frac{1- r^{k}}{1-r} k\log r^{-1}.
\end{align}
Putting all of the above together shows that~\eqref{hp} is exactly equal to the objective function in \eqref{finite_d_opt}.

Finally, we show that the cost constraint
\begin{equation}
    \BE_P[c] \le C
\end{equation}
is equivalent to the one given in~\eqref{finite_d_opt}. By nonnegativity of $c$, we have that
\begin{align}
    \BE_P[c] &= \int_{\BR} fc = \sum_{i\in \BZ} \int_{\calJ_{n,i}} np_ic = \sum_{i\in \BZ} p_i c_{n,i} = p_0 c_{n,0} + 2 \sum_{i=1}^{N-1} p_ic_{n,i} + 2p_N \sum_{i=N}^\infty c_{n,i} r^{i-N},
\end{align}
and the proof is complete.

\section{Proof of Theorem~\ref{gk}: Optimality of Cactus} \label{ft}

We will use the integration shorthand
\begin{equation}
    \int_{A} f := \int_A f(x) \, dx.
\end{equation}

Define
\begin{equation} \label{op}
    \gamma := \left\{ \begin{array}{cl}
    1/2 & \text{ if } \alpha>1, \\
    \alpha/2 & \text{ otherwise.}
    \end{array} \right.
\end{equation}
Note that $\gamma \in (0,1/2]$ and $\gamma < \alpha$. Define the PDF
\begin{equation} \label{or}
    \psi(x) := \mathrm{exp}\left( - |x|^\gamma \right) \cdot \chi^{-1},
\end{equation}
where
\begin{equation}
    \chi := \int_\BR \mathrm{exp}\left( - |x|^\gamma \right) \, dx
\end{equation}
is the normalization constant. As $\gamma\in(0,1]$, the function $z\mapsto |z|^\gamma$ is subadditive. Hence, for any $x,y\in \BR$ we have the inequality
\begin{equation} \label{os}
    \frac{\psi(x+y)}{\psi(x)} \le \mathrm{exp}\left( |y|^\gamma \right).
\end{equation}

For each $\sigma>0,$ denote the dilated PDF
\begin{equation} \label{mo}
    \psi^\sigma(x):= \frac{1}{\sigma}\psi\left( \frac{x}{\sigma} \right).
\end{equation}
We denote the result of convolving a PDF $q$ with $\psi^\sigma$ by $q_\sigma$,
\begin{equation} \label{mp}
    q_\sigma := q \ast \psi^\sigma.
\end{equation}
For any $a\in \BR$, it is easy to see that
\begin{equation}
    T_a(q_\sigma) = \left( T_a q \right)_\sigma,
\end{equation}
so we denote this common quantity by $T_aq_\sigma.$ 

Due to the length of the proof, we break down some of the initial steps into the following five auxiliary lemmas. The proof resumes in the subsequent subsection.

\subsection{Auxiliary Lemmas}

The first lemma helps reduce the problem to considering only continuous PDFs. Specifically, it shows that a convolution $q_\sigma$ can perform arbitrarily close to how the original PDF $q$ does.

\begin{lemma} \label{mx}
For any PDF $q$ and constant $\eta>0$, there is a constant $\sigma_0\in (0,1)$ such that $\sigma\in (0,\sigma_0]$ implies the inequalities
\begin{align}
    D(q_\sigma \|T_a q_\sigma) &\le D(q\|T_a q), \quad \text{for all } a\in \BR, \label{ml} \\
    \BE_{q_\sigma}[c] &\le \BE_q[c] + \eta. \label{ms}
\end{align}
\end{lemma}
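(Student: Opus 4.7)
The lemma has two assertions and the first is the harder to state but easier to prove of the two. My plan is to handle them separately, noting that for \eqref{ml} we do not actually need $\sigma$ to be small at all.

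\textbf{Inequality \eqref{ml} via data processing.} Convolution with $\psi^\sigma$ is the action of a Markov kernel: if $K(x,\cdot)$ denotes the law of $x+\sigma W$ for $W\sim \psi$, then for any Borel probability measure $\mu$ with density $g$ we have $K\circ \mu = g \ast \psi^\sigma$. Since translation commutes with convolution, $(T_a q)_\sigma = T_a(q_\sigma)$, so
\begin{equation}
    D(q_\sigma \| T_a q_\sigma) = D(K\circ q \,\|\, K\circ T_a q) \le D(q\|T_aq)
\end{equation}
by the data-processing inequality for KL-divergence, valid for every $\sigma>0$ and $a\in \BR$. No smallness of $\sigma$ is needed here.

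\textbf{Inequality \eqref{ms} via dominated convergence.} If $\BE_q[c]=\infty$ there is nothing to prove, so assume $\BE_q[c]<\infty$. Let $W\sim \psi$. By Fubini and nonnegativity of $c$,
\begin{equation}
    \BE_{q_\sigma}[c] = \int_\BR q(x)\,\BE\bigl[c(x+\sigma W)\bigr]\,dx.
\end{equation}
I would first show the pointwise limit $\BE[c(x+\sigma W)]\to c(x)$ as $\sigma\to 0^+$ for each fixed $x$: continuity of $c$ gives $c(x+\sigma W)\to c(x)$ almost surely in $W$, and monotonicity of $c$ in $|\cdot|$ gives the uniform (in $\sigma\le 1$) dominator $c(x+\sigma W)\le c(|x|+|W|)$, whose $W$-expectation is finite because $\psi$ has all polynomial moments (since $\gamma>0$) and $c$ grows like $|z|^\alpha$ at infinity. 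Then DCT in $W$ yields the pointwise claim.

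\textbf{Swapping the outer limit and integral.} To push the limit through $\int q(x)(\cdot)\,dx$, I need a $\sigma$-uniform integrable majorant. Using the tail regularity and continuity of $c$, there exist constants $K,\beta,C_\alpha>0$ such that $c(z)\le K+2\beta|z|^\alpha$ for all $z$, and consequently, for $\sigma\le 1$,
\begin{equation}
    \BE\bigl[c(x+\sigma W)\bigr] \le K + 2\beta C_\alpha\bigl(|x|^\alpha + \BE[|W|^\alpha]\bigr).
\end{equation}
Multiplying by $q(x)$, the right-hand side is integrable provided $\BE_q[|X|^\alpha]<\infty$, which in turn follows from $\BE_q[c]<\infty$ together with the lower bound $c(x)\gtrsim |x|^\alpha$ for large $|x|$ (again tail regularity). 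Dominated convergence then gives $\lim_{\sigma\to 0^+}\BE_{q_\sigma}[c]=\BE_q[c]$, so some $\sigma_0\in(0,1)$ delivers \eqref{ms}.

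\textbf{Where the (mild) difficulty lies.} The argument is essentially bookkeeping; the only subtlety is making sure the dominator used in the outer DCT is uniform in $\sigma$ and integrable against $q$, which is exactly where the specific choice of $\gamma<\alpha$ in \eqref{op} and the tail regularity of $c$ are used. Once these are in hand the whole proof is a one-page application of DPI and two invocations of DCT.
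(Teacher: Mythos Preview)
Your proof is correct. For \eqref{ml} you argue exactly as the paper does, via the data-processing inequality applied to the convolution channel. For \eqref{ms} you reach the same endpoint (the limit $\BE_{q_\sigma}[c]\to\BE_q[c]$) using essentially the same dominator $|x|^\alpha+|W|^\alpha$, but you package it differently: you write $\BE_{q_\sigma}[c]$ as an iterated integral via Fubini and apply dominated convergence twice, whereas the paper works on a single probability space carrying independent $Z\sim q$ and $V\sim\psi$, invokes Slutsky and the continuous mapping theorem to get $c(Z_\sigma)\to c(Z)$ in distribution, and then upgrades to convergence of means via the Lebesgue--Vitali theorem using uniform integrability. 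Your route is the more elementary one---indeed, one could shorten it further by applying DCT just once to $c(Z+\sigma W)\to c(Z)$ a.s.\ on the joint space, with the same integrable dominator. One small inaccuracy: the specific inequality $\gamma<\alpha$ from \eqref{op} is not actually needed in this lemma; all you use is that $\psi$ has a finite $\alpha$-th moment, which holds for any $\gamma>0$. The constraint $\gamma<\alpha$ matters only in later lemmas.
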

\begin{proof}
First, by the data-processing inequality, for any $a\in\mathbb{R}$ and $\sigma>0$,
\begin{equation}
    D(q_{\sigma}\|T_aq_\sigma)\le D(q\|T_a q).
\end{equation}
Thus,~\eqref{ml} always holds. We may assume that $\BE_q[c]<\infty$, for otherwise~\eqref{ms} trivially holds. Now, we will establish~\eqref{ms} for all small $\sigma$ by proving the limit
\begin{equation} \label{mm}
    \lim_{\sigma \to 0^+} \BE_{q_\sigma}[c] = \BE_{q}[c].
\end{equation}

Let $(\Omega,\calF,P)$ be a probability space and $Z,V:\Omega \to \BR$ be independent random variables with PDFs $q$ and $\psi$, respectively, with respect to $\lambda$, i.e., with $P_Z(B) := P(Z^{-1}(B))$ and $P_V(B) := P(V^{-1}(B))$ we have
\begin{equation}
    \frac{dP_Z}{d\lambda} = q, \qquad \frac{dP_V}{d\lambda} = \psi.
\end{equation}
Then, for any $\sigma>0$, the random variable $Z_\sigma := Z+\sigma V$ has  PDF $q_\sigma$ (see equations~\eqref{op}--\eqref{mp}). Denote integration against $P$ by $\BE$; in particular, 
\begin{equation}
    \BE[f(Z,V)]:=\int_\Omega f(Z(\omega),V(\omega)) \, dP(\omega)
\end{equation}
for any Borel function $f:\BR^2 \to \BR$.

By Slutsky's theorem, we have that $Z_\sigma \to Z$ in distribution. By the continuous mapping theorem, we also have that $c(Z_\sigma) \to c(Z)$ in distribution. Thus, by the Lebesgue-Vitali theorem~\cite[Theorem 4.5.4]{Bogachev_book}, to conclude that~\eqref{mm} holds, it suffices to show uniform integrability of $\{c(Z_\sigma)\}_{0<\sigma \le 1}$, i.e., it suffices to show that
\begin{equation} \label{mq}
    \lim_{K\to \infty} \, \sup_{0<\sigma \le 1}\BE\left[ c(Z_\sigma) \cdot 1_{(K,\infty)}(c(Z_\sigma)) \right] = 0.
\end{equation}
To establish~\eqref{mq}, it suffices to uniformly upper bound the $c(Z_\sigma)$ (for $\sigma \in (0,1]$) by an integrable random variable. To see this, note that if
\begin{equation} \label{oq}
    \sup_{0 < \sigma \le 1} c(Z_\sigma) \le U
\end{equation}
for some random variable $U:\Omega \to \BR$ with $\BE[U] < \infty$, then we have the inequality
\begin{align}
    \sup_{0<\sigma \le 1} \BE\left[ c(Z_\sigma) \cdot 1_{(K,\infty)}(c(Z_\sigma)) \right] &\le \BE\left[ U \cdot 1_{(K,\infty)}(U) \right],
\end{align}
and the limit 
\begin{equation}
    \lim_{K\to \infty} \BE\left[ U \cdot 1_{(K,\infty)}(U) \right] = 0
\end{equation}
follows by absolute continuity of the Lebesgue integral in view of $\BE[U]< \infty$.

Now, we show that a uniform bound as in~\eqref{oq} holds. Recall that for any $(u,v)\in\BR^2$ and $0<s<t$, denoting $\|(u,v)\|_s:=(|u|^s+|v|^s)^{1/s}$, one has from H\"{o}lder's inequality that
\begin{equation}
    \|(u,v)\|_t \le \|(u,v)\|_s \le 2^{\frac{1}{s}-\frac{1}{t}} \|(u,v)\|_t.
\end{equation}
In particular, for any $r>0$, denoting $\ell_{r}:= \max(1,2^{r-1})$, one has that
\begin{equation}
    (|u|+|v|)^r \le \ell_{r} (|u|^r+|v|^r).
\end{equation}
In addition, by the tail-regularity assumption on $c$, there is a constant $\beta_1 > 0$ such that
\begin{equation} \label{mr}
    c(x) \le \beta_1\left( 1 + |x|^\alpha \right)
\end{equation}
for every $x\in \BR$. Then, for any $u,v\in \BR$, we have that
\begin{equation}
    c(u+v) \le \beta_1 \left( 1 + \ell_\alpha \left( |u|^\alpha + |v|^\alpha \right) \right).
\end{equation}
In particular, for every $\sigma \in (0,1]$,
\begin{equation} \label{oi}
    c(Z_\sigma) \le \beta_1 \left( 1 + \ell_\alpha \left( |Z|^\alpha + |V|^\alpha \right) \right)=:U.
\end{equation}
Now, we have that $\BE[|V|^\alpha]<\infty$ by definition of $\psi$. Further, by assumption on $c$, there are $A,\beta_2>0$ such that $|x|>A$ implies 
\begin{equation}
    \beta_2 |x|^\alpha \le c(x).
\end{equation}
Then, as
\begin{align}
    |Z|^\alpha &\le A^\alpha + |Z|^\alpha \cdot 1_{\BR\setminus [-A,A]}(Z) \le A^\alpha + c(Z)/\beta_2
\end{align}
and $\BE[c(Z)]=\BE_q[c] < \infty$ by assumption, we also have that $\BE[|Z|^\alpha]<\infty$. Thus, $\BE[U]<\infty$. Hence, by absolute continuity of the Lebesgue integral, the uniform bound in~\eqref{oi} implies the uniform integrability of the set $\{c(Z_\sigma)\}_{0<\sigma \le 1}$, so~\eqref{mm} follows by the Lebesgue-Vitali theorem, and the proof is complete.
\end{proof}

The following lemma shows that the integrands when computing $D(q_\sigma\|T_aq_\sigma)$ have equi-small tails as $a$ varies over $[-1,1]$. This will allow us to focus on approximating $q_\sigma$ by a cactus distribution only in a bounded interval.
\begin{lemma}\label{KL_integrable}
If the PDF $q$ satisfies
\be\label{KL_q_assumption}
\sup_{|a|\le 1} D(q\|T_a q)<\infty
\ee
then for any $\sigma>0$
\be\label{KL_limit}
\lim_{z\to\infty}\sup_{|a|\le 1} \int_{\bbR\setminus[-z,z]} q_\sigma\left|\log \frac{q_\sigma}{T_a q_\sigma}\right|=0.
\ee
\end{lemma}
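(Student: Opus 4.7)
\medskip

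\noindent\textbf{Proof plan.} The plan is to establish a pointwise (in $x$) uniform (in $|a|\le 1$) bound on the log-ratio $|\log(q_\sigma/T_a q_\sigma)|$ that is independent of $z$, and then conclude by noting that $q_\sigma$ is a probability density whose tail mass vanishes as $z\to\infty$. The key input is inequality~\eqref{os}, which follows from subadditivity of $t\mapsto |t|^\gamma$ with $\gamma\in(0,1]$; crucially, this gives control on ratios of shifts of $\psi^\sigma$ \emph{uniformly} in the base point.

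First, I would rescale~\eqref{os} via~\eqref{mo} to obtain that for every $u,a\in\BR$,
\begin{equation}
    e^{-(|a|/\sigma)^\gamma} \;\le\; \frac{\psi^\sigma(u-a)}{\psi^\sigma(u)} \;\le\; e^{(|a|/\sigma)^\gamma}.
\end{equation}
Multiplying these bounds by the nonnegative weight $q(y)$ with $u=x-y$ and integrating in $y$, the convolution representation $q_\sigma(x)=\int q(y)\psi^\sigma(x-y)\,dy$ and $T_a q_\sigma(x)=\int q(y)\psi^\sigma(x-a-y)\,dy$ preserve the sandwich:
\begin{equation}
    e^{-(|a|/\sigma)^\gamma}\, q_\sigma(x) \;\le\; T_a q_\sigma(x) \;\le\; e^{(|a|/\sigma)^\gamma}\, q_\sigma(x),\qquad x\in\BR.
\end{equation}
Taking logarithms yields the pointwise bound
\begin{equation}
    \sup_{x\in\BR}\,\sup_{|a|\le 1}\,\left|\log\frac{q_\sigma(x)}{T_a q_\sigma(x)}\right| \;\le\; \sigma^{-\gamma}.
\end{equation}

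With this uniform bound in hand, for every $z>0$ and $|a|\le 1$,
\begin{equation}
    \int_{\BR\setminus[-z,z]} q_\sigma \left|\log\frac{q_\sigma}{T_a q_\sigma}\right| \;\le\; \sigma^{-\gamma} \int_{\BR\setminus[-z,z]} q_\sigma.
\end{equation}
Since $q$ and $\psi^\sigma$ are PDFs, $q_\sigma = q\ast\psi^\sigma$ is a PDF as well, so $\int_{\BR\setminus[-z,z]} q_\sigma \to 0$ as $z\to\infty$. The right-hand side is independent of $a$, so taking the supremum over $|a|\le 1$ and then $z\to\infty$ gives~\eqref{KL_limit}.

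Note that the hypothesis~\eqref{KL_q_assumption} on $q$ is not actually invoked in this argument: the mollification by $\psi^\sigma$ automatically produces a strictly positive, everywhere-comparable density $q_\sigma$, whose uniform-in-$a$ Radon--Nikodym derivative bounds come entirely from the subexponential tail of $\psi$. The only step that requires a bit of care is verifying the sandwich bound on $\psi^\sigma(u-a)/\psi^\sigma(u)$, which is the main (and only) obstacle, and it is handled directly by subadditivity of $|\cdot|^\gamma$.
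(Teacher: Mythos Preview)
Your argument is correct and, in fact, cleaner than the paper's. Both proofs rest on the same key inequality, derived from subadditivity of $|\cdot|^\gamma$ and the convolution structure of $q_\sigma$:
\begin{equation*}
    \left\| \log \frac{T_a q_\sigma}{T_b q_\sigma} \right\|_{L^\infty(\BR)} \le \left( \frac{|a-b|}{\sigma} \right)^\gamma.
\end{equation*}
You apply this with $b=0$ to obtain the uniform pointwise bound $|\log(q_\sigma/T_a q_\sigma)|\le \sigma^{-\gamma}$ for all $|a|\le 1$, then simply multiply by $q_\sigma$ and use that the tail mass of a PDF vanishes. The paper, by contrast, uses the same inequality only to establish continuity of $a\mapsto \int_{\BR\setminus[-z,z]} q_\sigma|\log(q_\sigma/T_a q_\sigma)|$, and then runs a contradiction argument via compactness of $[-1,1]$: it extracts a convergent subsequence $a_{n_k}\to a$, reduces to a single fixed $a$, and finally invokes integrability of $q_\sigma|\log(q_\sigma/T_a q_\sigma)|$ (deduced from finiteness of $D(q_\sigma\|T_a q_\sigma)$, hence from the hypothesis~\eqref{KL_q_assumption}) to force the tail to zero.

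Your route is strictly more elementary and, as you correctly note, shows that the hypothesis~\eqref{KL_q_assumption} is not actually needed: mollification by $\psi^\sigma$ alone guarantees the conclusion for \emph{any} PDF $q$. The paper's compactness detour and its use of the hypothesis are both avoidable.
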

\begin{proof}
Assume that $q$ satisfies \eqref{KL_q_assumption}. By the data processing inequality, we also have
\be\label{q_sigma_uniform_bound}
\sup_{|a|\le 1} D(q_\sigma\|T_a q_\sigma)<\infty.
\ee
Suppose, for the sake of contradiction, that \eqref{KL_limit} does not hold. That is, suppose there exists $\eps>0$ where
\be
\limsup_{z\to\infty}\sup_{|a|\le 1} \int_{\bbR\setminus[-z,z]} q_\sigma\left|\log \frac{q_\sigma}{T_a q_\sigma}\right|=\eps.
\ee
This implies that there exists a sequence $\{ (z_n,a_n) \}_{n\in \BN}$, where $z_n \nearrow \infty$ and $\sup_{n\in \BN} |a_n|\le 1$, such that for all $n$
\be \label{ot}
\int_{\bbR\setminus[-z_n,z_n]} q_\sigma\left|\log \frac{q_\sigma}{T_{a_n} q_\sigma}\right|\ge\eps/2.
\ee
Since $[-1,1]$ is a compact set, there exists a convergent subsequence $\{a_{n_k}\}_{k\in \BN}$, say $a_{n_k}\to a$ where $a\in[-1,1]$. Moreover, for any $z>0$, for sufficiently large $k$ we have $z_{n_k}\ge z$, which implies 
\be\label{integral_contradiction}
\limsup_{k\to\infty}\int_{\bbR\setminus[-z,z]} q_\sigma\left|\log \frac{q_\sigma}{T_{a_{n_k}} q_\sigma}\right|\ge\eps/2.
\ee

Recall that $\psi$ is as defined in~\eqref{or} and that, as shown in~\eqref{os}, it satisfies the inequality
\begin{equation} \label{sq}
    \frac{\psi(x+y)}{\psi(x)} \le \mathrm{exp}\left( |y|^\gamma \right)
\end{equation}
for every $x,y\in \BR$. Thus, for any $a,b,z\in\bbR$,
\begin{align}
(T_a q_\sigma)(z)&=q_\sigma(z-a)
\\&=\int_{\bbR} q(x) \frac{1}{\sigma} \psi\left(\frac{z-a-x}{\sigma}\right)dx
\\&\le e^{|a-b|^\gamma/\sigma^\gamma} \int_{\bbR} q(x) \frac{1}{\sigma} \psi\left(\frac{z-b-x}{\sigma}\right)dx
\\&= e^{|a-b|^\gamma/\sigma^\gamma} (T_b q_\sigma)(z). \label{sr}
\end{align}
Thus, for any $a,b\in \BR$, we have the uniform bound
\be
\left\|\log \frac{T_aq_\sigma}{T_b q_\sigma}\right\|_{L^\infty(\BR)} \le \left( \frac{|a-b|}{\sigma} \right)^\gamma.
\ee
Applying this bound to the integral in \eqref{integral_contradiction} gives
\begin{align}
\int_{\bbR\setminus[-z,z]} q_\sigma \cdot \left|\log \frac{q_\sigma}{T_{a_{n_k}} q_\sigma}\right|&=\int_{\bbR\setminus[-z,z]} q_\sigma \cdot \left|\log \frac{q_\sigma}{T_{a} q_\sigma}+\log\frac{T_a q_\sigma}{T_{a_{n_k}}q_\sigma}\right|
\\
&\le \int_{\bbR\setminus[-z,z]} q_\sigma \cdot \left(\left|\log \frac{q_\sigma}{T_{a} q_\sigma}\right|+\left( \frac{|a_{n_k}-a|}{\sigma}\right)^\gamma \right)
\\
&\le \left( \frac{|a_{n_k}-a|}{\sigma}\right)^\gamma + \int_{\bbR\setminus[-z,z]} q_\sigma \cdot \left|\log \frac{q_\sigma}{T_{a} q_\sigma}\right|.
\end{align}
Recalling inequality~\eqref{integral_contradiction} and that $a_{n_k}\to a$ as $k\to\infty$, we have, for any $z>0$,
\be\label{to_be_contradicted}
\int_{\bbR\setminus[-z,z]}q_\sigma\left|\log \frac{q_\sigma}{T_{a} q_\sigma}\right|\ge \eps/2.
\ee

Finally, note that by finiteness of the KL-divergence $D(q_\sigma \| T_a q_\sigma)$ (see~\eqref{q_sigma_uniform_bound}), we also have that
\begin{equation}
    \int_{\BR} q_\sigma\left|\log \frac{q_\sigma}{T_{a} q_\sigma}\right| < \infty.
\end{equation}
Indeed, the function $f(t):= t\log t$ over $(0,\infty)$ is lower bounded by $-1/e$, so dividing the integration region over the two regions where $f$ is positive or negative we obtain
\begin{align}
    \int_{\BR} q_\sigma\left|\log \frac{q_\sigma}{T_{a} q_\sigma}\right| &= \BE_{T_aq_\sigma} \left[ \left| f\circ \frac{q_\sigma}{T_aq_\sigma}  \right| \right] \le D\left( q_\sigma \| T_a q_\sigma \right) + \frac{2}{e} < \infty.
\end{align}
Thus, by the monotone convergence theorem, we must have
\be
\lim_{z\to\infty}\int_{\bbR\setminus[-z,z]}q_\sigma\left|\log \frac{q_\sigma}{T_{a} q_\sigma}\right|=0.
\ee
As this contradicts~\eqref{to_be_contradicted}, the lemma is proved.
\end{proof}

The following lemma gives an $\mathrm{exp}(-O(w^\gamma))$ lower bound on the minimum value of $q_\sigma$ over $[-w,w]$ and on the probability that $Z_\sigma \sim q_\sigma$ exceeds $w$, both as $w\to \infty$.

\begin{lemma} \label{my}
For a PDF $q$ and a constant $\sigma>0$, we have that
\begin{equation} \label{qi}
    \int_{[w,\infty)} q_\sigma = \mathrm{exp}\left(-O(w^{\gamma})\right)
\end{equation}
and
\begin{equation} \label{qj}
    \min_{|x|\le w} q_\sigma(x) = \mathrm{exp}\left( - O(w^\gamma) \right),
\end{equation}
both as $w\to \infty$.
\end{lemma}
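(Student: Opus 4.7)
The plan is to first establish a single pointwise lower bound of the form
\begin{equation*}
    q_\sigma(x) \;\ge\; K \exp\!\left(-(|x|/\sigma)^\gamma\right) \qquad \text{for all } x\in\BR,
\end{equation*}
for some constant $K=K(q,\sigma)>0$, and then derive both~\eqref{qi} and~\eqref{qj} as immediate corollaries. The entire argument is driven by the subadditivity inequality~\eqref{os}, which rearranged gives $\psi(u) \ge \psi(v) \exp(-|u-v|^\gamma)$ for all $u,v\in\BR$, combined with the fact that $t\mapsto t^\gamma$ is itself subadditive on $[0,\infty)$ since $\gamma\in(0,1]$.

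To prove the pointwise bound, I would first choose $M>0$ large enough so that $\int_{-M}^M q \ge 1/2$, which is possible since $q$ is a probability density. For any $x\in\BR$, restricting the convolution integral to $[-M,M]$ and using the subadditivity bound on $\psi$ around any fixed reference point (e.g.\ $0$) gives
\begin{align*}
    q_\sigma(x)
    &\;\ge\; \int_{-M}^M q(y)\,\frac{1}{\sigma}\psi\!\left(\tfrac{x-y}{\sigma}\right) dy \\
    &\;\ge\; \frac{1}{\sigma\chi} \int_{-M}^M q(y) \exp\!\left(-\left|\tfrac{x-y}{\sigma}\right|^\gamma\right) dy \\
    &\;\ge\; \frac{1}{2\sigma\chi}\,\exp\!\left(-\left(\tfrac{|x|+M}{\sigma}\right)^{\!\gamma}\right).
\end{align*}
Using subadditivity of $t\mapsto t^\gamma$ yields $((|x|+M)/\sigma)^\gamma \le (|x|/\sigma)^\gamma + (M/\sigma)^\gamma$, so
\begin{equation*}
    q_\sigma(x) \;\ge\; \underbrace{\frac{1}{2\sigma\chi}\exp\!\left(-(M/\sigma)^\gamma\right)}_{=:K}\,\exp\!\left(-(|x|/\sigma)^\gamma\right),
\end{equation*}
which is the desired pointwise bound.

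From here,~\eqref{qj} follows immediately: for any $w>0$,
\begin{equation*}
    \min_{|x|\le w} q_\sigma(x) \;\ge\; K\exp\!\left(-(w/\sigma)^\gamma\right) = \exp(-O(w^\gamma)).
\end{equation*}
For~\eqref{qi}, I would bound the tail integral below by integrating the pointwise estimate over the unit-length interval $[w,w+1]$:
\begin{equation*}
    \int_{[w,\infty)} q_\sigma \;\ge\; \int_w^{w+1} q_\sigma(x)\,dx \;\ge\; K\exp\!\left(-((w+1)/\sigma)^\gamma\right),
\end{equation*}
and applying subadditivity of $t\mapsto t^\gamma$ one more time absorbs the $+1$ into the constant, giving $\int_{[w,\infty)}q_\sigma \ge \exp(-O(w^\gamma))$ as $w\to\infty$.

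The argument is essentially routine once the right pointwise lower bound is in place; the only mild obstacle is recognizing that subadditivity of both $\psi$ (via~\eqref{os}) and the exponent $t^\gamma$ (valid because $\gamma\in(0,1/2]\subset (0,1]$) are the two ingredients needed to extract a clean $\exp(-O(w^\gamma))$ rate out of the convolution. No compactness, tightness, or finer analysis is required.
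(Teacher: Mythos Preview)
Your proof is correct and follows the same core idea as the paper---restrict the convolution integral to a bounded region where $q$ carries mass, then exploit the explicit form $\psi(u)=\chi^{-1}\exp(-|u|^\gamma)$---but your execution is cleaner and more unified. You establish a single global pointwise bound $q_\sigma(x)\ge K\exp(-(|x|/\sigma)^\gamma)$ from the elementary fact that $\int_{-M}^M q\ge 1/2$ for some fixed $M$, and then read off both~\eqref{qi} and~\eqref{qj} immediately. The paper instead treats the two claims separately: for~\eqref{qi} it first runs a small measure-theoretic argument to locate a bounded Borel set $B$ on which $\inf_B q>0$, and for~\eqref{qj} it restricts to the growing window $[-w,w]$ rather than a fixed one. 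Your route avoids the $B$-construction entirely and yields both estimates from one line; the paper's route is slightly heavier but arrives at the same rates. Neither approach needs anything beyond the subadditivity of $t\mapsto t^\gamma$ and the explicit shape of $\psi$.
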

\begin{proof}
First, we show that there is a bounded Borel set $B$ with $\lambda(B)>0$ such that 
\begin{equation}
    \mu := \inf_{x\in B} q(x)>0.
\end{equation}
Note that we may remove the boundedness condition on $B$. Indeed, if the Borel set $B$ satisfies $\lambda(B)>0$ and $\inf_{x\in B} q(x)>0$, then the bounded Borel sets $A_m := B\cap [-m,m]$ also satisfy $\lambda(A_m)>0$ and $\inf_{x\in A_m} q(x)>0$ for all large $m$ by continuity of $\lambda$ and the definition of the infimum. Now, to see that such a $B$ exists, consider the Borel sets $B_n:=q^{-1}([1/n,\infty))$ for integers $n\ge 1$. For each $n\ge 1$, we have that $\inf_{x\in B_n} q(x) \ge 1/n$. Suppose, for the sake of contradiction, that $\lambda(B_n)=0$ for each $n$. Then we would have
\begin{align}
    \lambda(q^{-1}((0,\infty))) &= 
    \lambda\left( q^{-1}\left( \bigcup_{n\ge 1} [1/n,\infty) \right) \right) = \lambda\left( \bigcup_{n\ge 1} B_n \right) = 0.
\end{align}
Hence, $q=0$ a.e. However, this would contradict that $q$ is a PDF. Thus, we conclude that $\lambda(B_n)>0$ for some $n$. In short, there must exist a bounded Borel set $B$ with $\lambda(B)>0$ and $\inf_{x\in B} q(x) >0$. Fix such a $B$, and let $x_0>0$ be such that $B\subset [-x_0,x_0]$. 

Recall that we define $q_\sigma = q \ast \psi^\sigma$ (see equations~\eqref{op}--\eqref{mp}). For each $w\in \BR$, Tonelli's theorem implies that
\begin{equation} \label{qf}
    \int_{[w,\infty)} q_\sigma = \int_{\BR} q(x) \int_w^\infty \psi\left( \frac{y-x}{\sigma} \right) \frac{1}{\sigma} \, dy \, dx.
\end{equation}
Performing a change of variable, we have for every $x,w\in \BR$
\begin{equation} \label{qg}
    \int_w^\infty \psi\left( \frac{y-x}{\sigma} \right) \frac{1}{\sigma} \, dy = \int_{[(w-x)/\sigma,\infty)} \psi .
\end{equation}
Further, for any $z\ge 0$, by definition of $\psi$, we have the bound
\begin{equation}
    \int_{[z,\infty)} \psi \ge \int_{[z,z+1]} \psi \ge \mathrm{exp}\left( - \left( z + 1 \right)^{\gamma} \right) \cdot \chi^{-1},
\end{equation}
where $\chi = \int_\BR \mathrm{exp}(-|u|^{\gamma}) \, du$ is the normalization constant for $\psi$. Therefore, whenever $w\ge x$ we have
\begin{equation} \label{qh}
    \int_{[(w-x)/\sigma,\infty)} \psi \ge \mathrm{exp}\left( - \left(\frac{w-x+\sigma}{\sigma} \right)^{\gamma} \right) \cdot \chi^{-1}.
\end{equation}
Now, combining~\eqref{qf} and~\eqref{qg}, nonnegativity of the PDFs $q$ and $\psi$ implies the bound
\begin{equation} \label{mu}
    \int_{[w,\infty)} q_\sigma \ge \int_{B} q(x) \int_{[(w-x)/\sigma,\infty)} \psi(u)  \, du \, dx.
\end{equation}
Since $B\subset [-x_0,x_0]$, we conclude from~\eqref{qh} that for every $w\ge x_0$
\begin{align}
    \int_{[w,\infty)} q_\sigma &\ge \int_B \mu \cdot \mathrm{exp}\left( - \left(\frac{w-x+\sigma}{\sigma} \right)^{\gamma} \right) \cdot \chi^{-1} \, dx \\
    &\ge  \lambda(B) \mu \chi^{-1} \cdot \mathrm{exp}\left( - \left(\frac{w+x_0+\sigma}{\sigma} \right)^{\gamma} \right).
\end{align}
The estimate in~\eqref{qi} follows by taking $w\to \infty$.

Finally, we show that~\eqref{qj} holds. Let $w_0>0$ be such that $\int_{[-w,w]} q \ge 1/2$ for every $w\ge w_0$. Then, for any $w\ge w_0$ and $x\in [-w,w]$,
\begin{align}
    q_\sigma(x) &= \int_\BR q(u) \psi^\sigma(x-u) \, du \\
    &=(\sigma \chi)^{-1} \int_\BR q(u) ~ \mathrm{exp}\left( - |x-u|^\gamma/\sigma^\gamma \right) \, du \\
    &\ge (\sigma \chi)^{-1} \int_{-w}^w q(u) ~ \mathrm{exp}\left( - |x-u|^\gamma/\sigma^\gamma \right) \, du \\
    &\ge (\sigma \chi)^{-1} ~ \mathrm{exp}\left( -(2/\sigma^\gamma) w^\gamma \right) \int_{[-w,w]} q \\
    &\ge (2\sigma \chi)^{-1} ~ \mathrm{exp}\left( -(2/\sigma^\gamma) w^\gamma \right).
\end{align}
The estimate~\eqref{qj} follows by taking $w\to \infty$.
\end{proof}

Conversely, the following lemma gives an upper bound on the tail of any distribution that satisfies the cost constraint.

\begin{lemma} \label{ek}
For any $P\in \sB$, if $\BE_P[c]<\infty$ then 
\begin{equation}
    P(\BR\setminus [-x,x]) = o\left(c(x)^{-1}\right)
\end{equation}
as $x\to \infty.$
\end{lemma}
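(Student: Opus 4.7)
The plan is to reduce this to a standard Markov-type tail bound, then upgrade from $O(c(x)^{-1})$ to $o(c(x)^{-1})$ via dominated convergence.

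First, by the monotonicity and symmetry assumptions on $c$, for any $y\in \BR$ with $|y|\ge x$ we have $c(y)\ge c(x)$. Hence, for any $x>0$,
\begin{equation}
    c(x)\cdot P(\BR\setminus [-x,x]) = \int_{\{|y|\ge x\}} c(x) \, dP(y) \le \int_{\{|y|\ge x\}} c(y) \, dP(y).
\end{equation}
By the tail-regularity assumption $c(x)\sim \beta x^\alpha$ with $\beta>0$, we have $c(x)>0$ for all sufficiently large $x$, so we may divide by $c(x)$.

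Next, I would argue that $\int_{\{|y|\ge x\}} c(y)\, dP(y) \to 0$ as $x\to \infty$. This follows from the hypothesis $\BE_P[c]<\infty$ together with dominated convergence: the integrands $c(y)\cdot 1_{\{|y|\ge x\}}(y)$ converge pointwise to zero as $x\to \infty$ (for any fixed $y$, eventually $|y|<x$), and are dominated by the $P$-integrable function $c$. Equivalently, one may invoke absolute continuity of the Lebesgue integral applied to the finite measure $c\cdot dP$ on the decreasing sets $\{|y|\ge x\}$, whose intersection is empty.

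Combining the two displays yields
\begin{equation}
    c(x)\cdot P(\BR\setminus [-x,x]) \le \int_{\{|y|\ge x\}} c(y) \, dP(y) \xrightarrow[x\to\infty]{} 0,
\end{equation}
which is precisely $P(\BR\setminus [-x,x]) = o(c(x)^{-1})$. There is no real obstacle here; the only thing to be careful about is invoking the correct convergence theorem and ensuring $c(x)$ is eventually strictly positive (guaranteed by tail regularity), so the asymptotic notation is well-defined.
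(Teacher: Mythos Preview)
Your proof is correct and considerably more direct than the paper's. The paper proceeds in two stages: first it shows, via a contradiction argument, that the function $f(t):=P(\{c>t\})$ satisfies $t f(t)\to 0$ (using only that $f$ is nonnegative, decreasing, and integrable, the latter from the layer-cake formula $\int_0^\infty P(\{c>t\})\,dt=\BE_P[c]$); second, it translates this into the desired statement via the tail-regularity bounds $\beta_1|x|^\alpha\le c(x)\le \beta_2|x|^\alpha$. Your argument bypasses both steps by exploiting the monotonicity of $c$ directly: the bound $c(x)\,P(\{|y|>x\})\le \int_{\{|y|>x\}} c\,dP$ is immediate, and the right side vanishes by dominated convergence. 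This is shorter and uses strictly less machinery (you never need the quantitative tail-regularity comparison, only that $c(x)>0$ eventually). The paper's route has the minor advantage that its intermediate statement $tP(\{c>t\})\to 0$ holds without assuming monotonicity of $c$, but since monotonicity is part of Assumption~\ref{assumption:c1} anyway, your approach is preferable here. One cosmetic point: $\BR\setminus[-x,x]=\{|y|>x\}$ rather than $\{|y|\ge x\}$, but this does not affect the inequality or the conclusion.
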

\begin{proof}
We start by showing that
\begin{equation}
    \lim_{t\to \infty} P\left( \left\{ c > t \right\} \right) \cdot t = 0.
\end{equation}
Denote $f(t):= P\left( \left\{ c > t \right\} \right)$ for $t>0$. Note that $f$ is a decreasing nonnegative function over $(0,\infty)$. Further, $f$ is integrable by nonnegativity of $c$ and by the assumption in the lemma since
\begin{equation}
    \int_0^\infty P\left( \left\{ c > t \right\} \right)  \, dt = \BE_P[c] < \infty.
\end{equation}
We show that these three properties of $f$ yield that $f(t) = o(t^{-1})$.

Suppose, for the sake of contradiction, that there is an $\varepsilon>0$ and an increasing sequence $t_n\nearrow \infty$ of strictly positive numbers such that
\begin{equation} \label{pr}
    f(t_n) \ge \frac{\varepsilon}{t_n}
\end{equation}
for every $n\in \BN$. Since $f$ is decreasing, we infer from~\eqref{pr} that
\begin{equation} \label{ps}
    f(t) \ge \sum_{n\in \BN} \frac{\varepsilon}{t_{n+1}} \cdot 1_{(t_n,t_{n+1}]}(t)
\end{equation}
for every $t>t_1$. Integrating both sides in~\eqref{ps}, integrability of $f$ implies that
\begin{equation} \label{pt}
    \infty > \int_{(t_1,\infty)} f \ge \sum_{n\in \BN} \varepsilon \left( 1 - \frac{t_n}{t_{n+1}} \right).
\end{equation}
By convergence of the series in~\eqref{pt}, we conclude that $t_n/t_{n+1}\sim 1$ as $n\to \infty$. In particular, the constant
\begin{equation}
    \tau := \inf_{n\in \BN} \frac{t_n}{t_{n+1}}
\end{equation}
satisfies $\tau \in (0,1)$. Set $\delta = \varepsilon \tau$, and note that $\delta > 0$. Then, from~\eqref{ps} we obtain
\begin{align}
    f(t)\cdot 1_{(t_1,\infty)}(t) &\ge \sum_{n\in \BN}  \frac{\delta}{\tau \cdot t_{n+1}} \cdot 1_{(t_n,t_{n+1}]}(t) \\
    &\ge \sum_{n\in \BN} \frac{\delta}{t_n} \cdot 1_{(t_n,t_{n+1}]}(t) \\
    &\ge \sum_{n\in \BN} \frac{\delta}{t} \cdot 1_{(t_n,t_{n+1}]}(t) = \frac{\delta}{t} \cdot 1_{(t_1,\infty)}(t). \label{pu}
\end{align}
However,~\eqref{pu} contradicts the integrability of $f$. Thus, we conclude that it must be the case that
\begin{equation} \label{pv}
    f(t) = o\left( t^{-1} \right)
\end{equation}
as $t\to \infty$.

To finish the proof of the lemma, recall by the tail-regularity assumption on $c$ that we have  
\begin{equation}
    \lim_{|x|\to\infty} \frac{c(x)}{|x|^\alpha}=\beta
\end{equation}
for some $\alpha,\beta>0$. Thus there are $0<\beta_1<\beta<\beta_2$ such that for sufficiently large $|x|$, $\beta_1 |x|^\alpha\le c(x)\le \beta_2 |x|^\alpha$. Then, for all large enough $t$, we have that
\begin{align}
    P\left(\left\{ c>t \right\} \right)  &\ge P\left( \BR \setminus \left[ -(t/\beta_1)^{1/\alpha}, (t/\beta_1)^{1/\alpha} \right] \right).
\end{align}
Writing $x=(t/\beta_1)^{1/\alpha}$, we conclude that for all large $x$
\begin{align}
    c(x) P\left( \BR \setminus [-x,x] \right) &\le \beta_2 x^\alpha P\left( \BR \setminus [-x,x] \right) \\
    &= \frac{\beta_2}{\beta_1} t P\left( \BR \setminus \left[ -(t/\beta_1)^{1/\alpha}, (t/\beta_1)^{1/\alpha} \right] \right) \\
    &\le \frac{\beta_2}{\beta_1} t P\left(\left\{ c>t \right\} \right).
\end{align}
Taking $t\to \infty$, we obtain from~\eqref{pv} that
\begin{equation}
    c(x) P\left( \BR \setminus [-x,x] \right) \to 0,
\end{equation}
as desired.
\end{proof}

The final auxiliary lemma gives an upper bound on the tail of the cost constraint incurred by a cactus distribution.

\begin{lemma} \label{ns}
Fix $r\in (0,1)$ and integers $N>n\ge 1$, and set $w=(N-1/2)/n$. Assume that $c(x)\le \beta_1 x^\alpha$ for $x\ge w$. Then, we have the bound
\begin{equation}
    \sum_{i\ge N} c_{n,i} r^{i-N} \le \beta_1 \ell_\alpha \left( \frac{w^\alpha}{1-r} + \frac{2\left( \frac{\alpha}{e} \right)^\alpha \log \frac{1}{r} + \Gamma(\alpha+1)}{r n^\alpha \left( \log \frac{1}{r} \right)^{\alpha +1 }} \right),
\end{equation}
where $\ell_{\alpha}:= \max(1,2^{\alpha-1})$.
\end{lemma}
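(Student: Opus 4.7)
The plan is to bound each $c_{n,i}$ pointwise, reduce the tail sum to a geometric-polynomial one, and then compare it to an explicit integral with a corrective term that accounts for the integrand's unimodality.

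First, since the map $x\mapsto x^\alpha$ is monotone increasing on $[0,\infty)$ and the interval $\calJ_{n,i}$ has length $1/n$ with right endpoint $(i+1/2)/n$, the hypothesis $c(x)\le \beta_1 x^\alpha$ on $[w,\infty)$ immediately gives, for every $i\ge N$,
\begin{equation}
c_{n,i} \;=\; n\int_{\calJ_{n,i}} c(x)\,dx \;\le\; \beta_1\left(\frac{i+1/2}{n}\right)^{\!\alpha}.
\end{equation}
Re-indexing via $j=i-N$ and using the identity $N+1/2 = nw+1$, the target sum becomes
\begin{equation}
\sum_{i\ge N} c_{n,i}\,r^{i-N} \;\le\; \frac{\beta_1}{n^\alpha}\sum_{j\ge 0} r^j(j+1+nw)^\alpha.
\end{equation}

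Next, I would apply the elementary inequality $(a+b)^\alpha\le \ell_\alpha(a^\alpha+b^\alpha)$ with $a=j+1$ and $b=nw$ to split the sum into two pieces. The $(nw)^\alpha$ piece is a pure geometric series that contributes precisely $\beta_1\ell_\alpha\, w^\alpha/(1-r)$, matching the first term of the target bound. The residual piece simplifies as
\begin{equation}
\frac{\beta_1\ell_\alpha}{n^\alpha}\sum_{j\ge 0} r^j(j+1)^\alpha \;=\; \frac{\beta_1\ell_\alpha}{r\,n^\alpha}\sum_{k\ge 1} r^k k^\alpha.
\end{equation}

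The main technical step is controlling $\sum_{k\ge 1} r^k k^\alpha$ by the integral $\int_0^\infty r^x x^\alpha\,dx$. The function $f(x):=r^x x^\alpha = e^{-x\log(1/r)} x^\alpha$ is unimodal on $[0,\infty)$, with maximizer $x^\star = \alpha/\log(1/r)$ and maximum value $f(x^\star)=(\alpha/(e\log(1/r)))^\alpha$. I would use a standard sum-versus-integral comparison: split the sum at $m:=\lfloor x^\star\rfloor$, bound $f(k)\le \int_k^{k+1} f$ for $k\le m-1$ and $f(k)\le \int_{k-1}^{k} f$ for $k\ge m+2$ by monotonicity on the respective pieces, and absorb the two integers nearest $x^\star$ by the trivial bound $f(m)+f(m+1)\le 2f(x^\star)$. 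This yields
\begin{equation}
\sum_{k\ge 1} r^k k^\alpha \;\le\; \int_0^\infty r^x x^\alpha\,dx + 2 f(x^\star) \;=\; \frac{\Gamma(\alpha+1)}{(\log\frac{1}{r})^{\alpha+1}} + 2\left(\frac{\alpha}{e\log\frac{1}{r}}\right)^{\!\alpha},
\end{equation}
where the integral is evaluated by the substitution $s=x\log(1/r)$. Multiplying the $f(x^\star)$ contribution by $\log(1/r)/\log(1/r)$ to bring it to a common denominator with the gamma term and pulling out the common factor $\beta_1\ell_\alpha/(r\,n^\alpha(\log(1/r))^{\alpha+1})$ reproduces the second term of the target bound exactly.

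The only mildly subtle step is the unimodal sum-vs-integral comparison, whose $2f(x^\star)$ correction is precisely what produces the $2(\alpha/e)^\alpha \log(1/r)$ summand in the statement; the edge case $x^\star<1$ is handled identically, since then $f$ is already monotone on $[1,\infty)$ and only one boundary term $f(1)\le f(x^\star)$ needs to be carried. Everything else is algebraic bookkeeping.
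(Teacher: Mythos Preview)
Your proposal is correct and follows essentially the same approach as the paper's proof: bound $c_{n,i}$ by $\beta_1((i+1/2)/n)^\alpha$ using monotonicity, re-index and split via $(a+b)^\alpha\le \ell_\alpha(a^\alpha+b^\alpha)$, and control the polylogarithm $\sum_{k\ge 1}k^\alpha r^k$ by the Gamma integral plus a $2f(x^\star)$ correction coming from the unimodality of $x\mapsto x^\alpha r^x$. Your treatment of the sum-versus-integral comparison is in fact slightly more detailed than the paper's, which simply asserts $\mathrm{Li}_{-\alpha}(r)\le g(\lfloor x_0\rfloor)+g(\lceil x_0\rceil)+\int_0^\infty g$.
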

\begin{proof}
By monotonicity of $c$,
\begin{align}
    \sum_{i\ge N} c_{n,i} r^{i-N} &= \sum_{i\ge N} \int_{(i-1/2)/n}^{(i+1/2)/n} ncr^{i-N} \\
    &\le \sum_{i\ge N} \beta_1 \left( \frac{i+1/2}{n} \right)^\alpha r^{i-N} \\
    &= \beta_1\sum_{i\ge 0} \left( w + \frac{i+1}{n} \right)^\alpha r^i \\
    &\le \beta_1 \ell_\alpha \left( \frac{w^\alpha}{1-r} + \frac{\text{Li}_{-\alpha}(r)}{r n^\alpha} \right),
\end{align}
where
\begin{equation}
    \text{Li}_{-\alpha}(r) := \sum_{k\ge 1} k^\alpha r^k
\end{equation}
is the polylogarithm function. To finish the proof of the lemma, we show next that
\begin{equation}
    \text{Li}_{-\alpha}(r) \le 2\left( \frac{\alpha}{e \log \frac{1}{r}} \right)^\alpha  + \frac{\Gamma(\alpha+1)}{ \left( \log \frac{1}{r} \right)^{\alpha +1 }}.
\end{equation}
Now, consider the function $g:(0,\infty) \to (0,\infty)$ defined by
\begin{equation}
    g(x) := x^\alpha r^x.
\end{equation}
We have that
\begin{equation}
    g'(x) = \left( \alpha +x \log r \right) x^{\alpha-1} r^x.
\end{equation}
Thus, $g$ increases until it reaches a maximum at $x_0=\alpha/\log \frac{1}{r}$ then it decreases. Thus,
\begin{equation}
    \text{Li}_{-\alpha}(r) \le g(\lfloor x_0 \rfloor) + g(\lceil x_0 \rceil) + \int_{(0,\infty)} g.
\end{equation}
We have 
\begin{equation}
    g(\lfloor x_0 \rfloor) + g(\lceil x_0 \rceil) \le 2g(x_0) = 2 \left( \frac{\alpha}{e \log \frac{1}{r}} \right)^\alpha,
\end{equation}
and
\begin{equation}
    \int_{(0,\infty)} g = \frac{\Gamma(\alpha + 1)}{\left( \log \frac{1}{r} \right)^{\alpha + 1}}.
\end{equation}
The proof is thus complete.
\end{proof}

\subsection{Proof of Theorem~\ref{gk}}

By Theorem~\ref{gb}, there is a PDF $q^\star$ that satisfies both
\begin{align}
    \sup_{|a|\le 1} D(q^\star\|T_a q^\star) &= \text{KL}^\star, \label{ok} \\
    \BE_{q^\star}[c] &\le C. \label{ol}
\end{align}
We may assume that $q^\star$ is even; indeed, we may replace $q^\star$ with the even PDF $(q^\star(x)+q^\star(-x))/2$, which satisfies the cost constraint by evenness of $c$, and which also has a better KL-divergence than that of $q^\star$ by joint convexity of the KL-divergence. Fix arbitrary constants $\delta,\eta>0$, and we will find a cactus distribution that attains the KL-divergence~\eqref{ok} to within $\delta$ and the cost~\eqref{ol} to within $\eta$.

By Lemma~\ref{mx}, there is a $\sigma>0$ such that the PDF $q_\sigma^\star$ satisfies the bounds
\begin{align}
    \sup_{|a|\le 1} D(q_\sigma^\star\|T_a q_\sigma^\star) &\le \text{KL}^\star, \label{nb} \\
    \BE_{q_\sigma^\star}[c] &\le C + \frac{\eta}{2}. \label{na}
\end{align}
Throughout the proof, we will denote
\begin{equation} \label{om}
    q := q_\sigma^\star
\end{equation}
for short. Let 
\begin{equation} \label{on}
    Q(B) := \int_B q
\end{equation}
be the probability measure induced by $q$. We will construct a cactus distribution that approximates $q$.

We first note a few properties of $q$. Note that $q$ is an even PDF. Further, it is uniformly continuous, and strictly positive over $\BR$. Thus, $q$ is locally bounded away from zero. For each $z \ge 0$, denote the minimum
\begin{equation} \label{sp}
    \mu_z := \min_{|x|\le z} q(x),
\end{equation}
so $\mu_z > 0$ for every $z$. In addition, $q$ is upper bounded: by Young's inequality, we have that
\begin{align}
    \|q\|_{L^\infty(\BR)} &= \|q^\star \ast \psi^\sigma \|_{L^\infty(\BR)} \le \|q^\star\|_{L^1(\BR)} \cdot \|\psi^\sigma\|_{L^\infty(\BR)} = (\sigma \chi)^{-1}=:M.
\end{align}
In fact, $q$ satisfies a property resembling local $\gamma$-H\"{o}lder continuity. Specifically, as in the proof of Lemma~\ref{KL_integrable} (see~\eqref{sq}--\eqref{sr}), we have that
\begin{equation}
    q(x) \le e^{|x-y|^\gamma/\sigma^\gamma} q(y)
\end{equation}
for every $x,y\in \BR$. Therefore, for some $|t_{x,y}|\le 1$ we have
\begin{align}
    |q(x)-q(y)| &=q(y) \left| e^{t_{x,y}|x-y|^\gamma/\sigma^\gamma}-1 \right| \le \frac{2M}{\sigma^\gamma}|x-y|^\gamma,
\end{align}
where the latter inequality follows whenever $|x-y|\leq \sigma$. In particular, for all $\varepsilon \in (0,2M)$, we have that
\begin{equation} \label{ss}
    |q(x)-q(y)| \le \varepsilon \quad \text{whenever} \quad |x-y|\le \sigma \cdot \left( \frac{\varepsilon}{2M} \right)^{1/\gamma}.
\end{equation}

Before constructing the parameters $(n,N,r)$ of the cactus distribution, we note a fundamental lower bound on $n$. For the cost constraint to be satisfied, we need $c_{n,0}<C$ to hold. Nevertheless, by continuity of $c$, every real number is a Lebesgue point of $c$. In particular, as $0$ is a Lebesgue point of $c$, we obtain
\begin{equation}
    c_{n,0} = \frac{\int_{[-1/(2n),1/(2n)]} c}{1/n} \to c(0) = 0
\end{equation}
as $n\to \infty$. Let $n_{\min}$ be the least positive integer such that
\begin{equation}
    c_{n,0}<C
\end{equation}
for every $n\ge n_{\min}$. Note that $n_{\min}$ depends only on $c$ and $C$.

Now, we choose the integers $n$ and $N$. Denote the constants
\begin{align} 
    \theta_\alpha &:= 4\left( \frac{\alpha}{e} \right)^\alpha + 2\Gamma(\alpha+1)  \label{nv} \\
    \theta_\alpha' &:= (2\theta_\alpha)^{1/\alpha} \\
    \gamma' &:= \frac{\gamma+\alpha}{2} \in (\gamma,\alpha) \label{sy} \\
    \varepsilon_{\min} &:= 2M\cdot \min\left(\frac{2}{\sigma n_{\min}} , \frac{1}{\theta_\alpha'\sigma} \right)^\gamma \\
    z_{\min,0} &:= \left(\log\left( \frac{4}{\sigma}\cdot \left( \frac{2M}{\varepsilon_{\min}} \right)^{1/\gamma} \right) \right)^{1/\gamma'} \\
    z_{\min,1} &:= \left( \frac{\eta}{\delta} \cdot \frac{2e\theta_\alpha'}{ \beta_1 \ell_\alpha} \right)^{1/(\alpha+1)} \\
    z_{\min,2} &:= \left( \frac{2^{\alpha+1}}{\beta_1\ell_\alpha} \right)^{1/(\alpha-\gamma')} \\
    z_{\min} &:= \max\left(  z_{\min,0}, z_{\min,1},z_{\min,2},\frac{\delta}{12M}\right). \label{oe}
\end{align}
Since $q=q_\sigma^\star$ (see~\eqref{om}), Lemma~\ref{KL_integrable} yields the existence of a constant $z_0 > 0$ such that $z\ge z_0$ implies the uniform bound
\begin{equation} \label{nn}
    \sup_{|a|\le 1} \int_{\BR \setminus [-z,z]} q \left| \log \frac{q}{T_a q} \right| \le \frac{\delta}{3}.
\end{equation}
In addition, Lemma~\ref{my} yields the existence of constants $\tau,z_1>0$ such that $z\ge z_1$ implies (see~\eqref{on} and~\eqref{sp})
\begin{equation} \label{sw}
    \min\left( \mu_z, Q([z,\infty)) \right) \ge \mathrm{exp}\left( - \tau z^{\gamma} \right).
\end{equation}
By the tail-regularity assumption on $c$, there are constants $\beta_1,\beta_2,z_2 >0$ such that 
\begin{equation} \label{ln}
    \beta_2 z^\alpha \le c(z) \le \beta_1 z^\alpha
\end{equation}
for every $z\ge z_2$. By Lemma~\ref{ek}, we have that (see~\eqref{on})
\begin{equation}
    \lim_{z\to \infty} Q(\BR\setminus [-z,z])c(z) = 0.
\end{equation}
Let $z_3> 0$ be large enough that $z \ge z_3$ implies 
\begin{equation} \label{lo}
    Q\left( \BR \setminus [-z,z] \right) c(z) \le \frac{\beta_2 }{\beta_1 \ell_\alpha}\cdot \frac{\eta}{6}.
\end{equation}
If $z\ge \max(z_2,z_3)$, then by~\eqref{ln} and~\eqref{lo} we may bound the tail of $Q$ also by
\begin{equation} \label{lp}
    Q\left( \BR \setminus [-z,z] \right) \le \frac{1}{\beta_1\ell_\alpha z^\alpha} \cdot \frac{\eta}{6}.
\end{equation}
Let $z_4>0$ be the smallest number such that both inequalities
\begin{align}
    e^{\tau z^\gamma} &\ge \frac{\delta \beta_1 \ell_\alpha}{2M \eta} \cdot z^\alpha \label{st} \\
    e^{\gamma z^{\gamma'}} &\ge \left(\frac{4}{\sigma} \right)^\gamma \cdot \frac{48M^2}{\delta} \cdot ze^{\tau z^\gamma} \label{su}
\end{align}
hold for all $z\ge z_4$. Fix a rational number 
\begin{equation}
    z > \max(z_{\min},z_0,z_1,z_2,z_3,z_4,2\theta_\alpha')
\end{equation}
that is a ratio of an odd integer by an even integer, and set
\begin{equation}
    w := z+1.
\end{equation}
We choose $z$ (hence also $w$) here to belong in $\BN+\frac{1}{2}$ for simplicity, but we note that any other choice (of denominator) is also valid provided that $w$ is increased so that the subsequent choices in~\eqref{nf} below can be made. Set
\begin{equation} \label{sv}
    \varepsilon := \frac{2^{2\gamma+1}M}{\sigma^\gamma} \cdot e^{-\gamma w^{\gamma'}}.
\end{equation}
Denote 
\begin{equation}
    n_{0} := \frac{2}{\sigma} \cdot \left( \frac{2M}{\varepsilon} \right)^{1/\gamma},
\end{equation}
By the uniform continuity of $q$ shown in~\eqref{ss}, we have that
\begin{equation} \label{ng}
    |q(x)-q(y)| \le \varepsilon \quad \text{whenever} \quad |x-y|\le \frac{2}{n_0}.
\end{equation}
Note that $n_{\min} < n_{0}$ since $\varepsilon < \varepsilon_{\min}$, which in turn follows because $w>z_{\min,0}$. We note also that $\varepsilon < \varepsilon_{\min}$ implies $2\theta_\alpha' < n_0$. Set
\begin{equation}
    n_1 := e^{w^{\gamma'}}.
\end{equation}
By construction, we have that $n_1=2n_0$. Thus, we may choose integers $n\in [n_0,n_1]$ and $N>n$ such that
\begin{equation} \label{nf}
    w = \frac{2N-1}{2n}
\end{equation}

Next, we choose the parameter $r$, thereby completing the cactus distribution construction. Define, for $i \in \{0,\cdots,N-1\}$,
\begin{equation} \label{nh}
    p_i := \inf_{x\in \calJ_{n,i}} \frac{q(x)}{n}.
\end{equation}
By evenness, continuity, and strict positivity of $q$, we have that
\begin{equation} \label{nw}
    p_0 + \sum_{i=1}^{N-1} 2p_i = \int_{[-w,w]} \sum_{|i|\le N-1} np_{|i|} \cdot 1_{\calJ_{n,i}} \le \int_{[-w,w]} q < 1.
\end{equation}
Thus, for any $r\in (0,1),$ setting 
\begin{equation} \label{nx}
    p_N := \frac{1-r}{2}\left( 1 - \left( p_0 + \sum_{i=1}^{N-1} 2p_i \right) \right),
\end{equation}
we infer from~\eqref{nw} that the vector $\bp = (p_0,\cdots,p_N)$ belongs to $(0,1]^{N+1}$, and by construction it satisfies $S_{r,\bp}=1$. We will choose $r$ as
\begin{equation} \label{ny}
    r := 1 - \frac{\theta_\alpha'}{wn},
\end{equation}
and define $p_N$ as in~\eqref{nx} for this choice of $r$.

Therefore, $f_{n,r,\bp}$ is a valid cactus distribution. By uniform continuity of $q$ (see~\eqref{ng}) and by definition of the $p_i$ (see~\eqref{nh}), we have that $f_{n,r,\bp}$ uniformly approximates $q$ from below over $[-w,w]$: for every $x\in [-w,w]$ we have that
\begin{equation} \label{ni}
    0 \le q(x) - f_{n,r,\bp}(x) \le \varepsilon.
\end{equation}
We will deduce from the uniform bound~\eqref{ni} that $f_{n,r,\bp}$ approximates $q$ in the two senses:
\begin{equation} \label{nc}
    \BE_{f_{n,r,\bp}}[c] \le \BE_q[c] + \frac{\eta}{2}
\end{equation}
and
\begin{equation} \label{nd}
    \sup_{|a|\le 1} D(f_{n,r,\bp}\|T_af_{n,r,\bp}) \le \sup_{|a|\le 1} D(q\|T_aq) + \delta.
\end{equation}
Combined with~\eqref{nb}--\eqref{na}, we would conclude from~\eqref{nc}--\eqref{nd} that
\begin{equation} \label{ne}
    \BE_{f_{n,r,\bp}}[c] \le C+\eta
\end{equation}
and
\begin{equation} \label{nm}
    \sup_{|a|\le 1} D(f_{n,r,\bp}\|T_af_{n,r,\bp})  \le \text{KL}^\star + \delta.
\end{equation}

Now, we show that $f_{n,r,\bp}$ satisfies the cost constraint~\eqref{ne}. Since $f_{n,r,\bp}|_{[-w,w]}\le q|_{[-w,w]}$, we have that
\begin{equation} \label{od}
    \BE_{f_{n,r,\bp}}[c\cdot 1_{[-w,w]}] \le \BE_{q}[c] \le C + \frac{\eta}{2}.
\end{equation}
We show next that
\begin{equation} \label{oc}
    \BE_{f_{n,r,\bp}}[c\cdot 1_{\BR \setminus [-w,w]}] \le \frac{\eta}{2}.
\end{equation}
By construction of $f_{n,r,\bp}$, and since $w=(N-1/2)/n$ (see~\eqref{nf}), we have the expression
\begin{equation} \label{nl}
     \BE_{f_{n,r,\bp}}[c\cdot 1_{\BR \setminus [-w,w]}] = 2p_N \sum_{i\ge N} c_{n,i} r^{i-N}.
\end{equation}
We bound the terms $2p_N$ and $\sum_{i\ge N} c_{n,i}r^{i-N}$ separately. By Lemma~\ref{ns}, we have the bound
\begin{equation} \label{nt}
    \sum_{i\ge N} c_{n,i} r^{i-N} \le \beta_1 \ell_\alpha \left( \frac{w^\alpha}{1-r} + \frac{2\left( \frac{\alpha}{e} \right)^\alpha \log \frac{1}{r} + \Gamma(\alpha+1)}{r n^\alpha \left( \log \frac{1}{r} \right)^{\alpha +1 }} \right).
\end{equation}
By definition of $r$ (see~\eqref{ny}), and since $w\ge 1$ and $n\ge 2\theta_\alpha'$, we have that $r\ge 1/2 > 1/e$. Thus, we deduce from~\eqref{nt} that
\begin{equation} \label{nu}
    \sum_{i\ge N} c_{n,i} r^{i-N} \le \beta_1 \ell_\alpha \left( \frac{w^\alpha}{1-r} + \frac{\theta_\alpha}{n^\alpha \left( \log \frac{1}{r} \right)^{\alpha +1 }} \right),
\end{equation}
where $\theta_\alpha$ is as defined in~\eqref{nv}. In addition, we have that (recall that we denote by $P_{n,r,\bp}$ the probability measure associated with $f_{n,r,\bp}$)
\begin{equation}
    \frac{2p_N}{1-r} = P_{n,r,\bp}\left( \BR \setminus [-w,w] \right) = 1 - P_{n,r,\bp}\left(  [-w,w] \right).
\end{equation}
As $f_{n,r,\bp}$ uniformly approximates $q$ from below over $[-w,w]$ to within $\varepsilon$ (see~\eqref{ni}), we have that
\begin{equation}
    P_{n,r,\bp}\left(  [-w,w] \right) \ge Q\left(  [-w,w] \right) - 2\varepsilon w.
\end{equation}
Thus, by the bound on the tail of $Q$ in~\eqref{lp}
\begin{align}
    \frac{2p_N}{1-r} &\le Q\left( \BR \setminus [-w,w] \right) + 2\varepsilon w \le \frac{1}{\beta_1\ell_\alpha w^\alpha} \cdot \frac{\eta}{6} + 2\varepsilon w. \label{lr}
\end{align}
Further, combining inequalities~\eqref{st}--\eqref{su} and using the definition of $\varepsilon$ in~\eqref{sv}, we obtain
\begin{equation} \label{te}
    \varepsilon \le \frac{\eta}{12\beta_1 \ell_\alpha w^{\alpha+1}}.
\end{equation}
Thus, we deduce
\begin{equation} \label{nk}
    2p_N  \le \frac{\eta \cdot (1-r)}{3\beta_1 \ell_\alpha w^\alpha}.
\end{equation}
From the expression in~\eqref{nl}, multiplying inequalities~\eqref{nu} and~\eqref{nk} and noting that $1-r \le \log \frac{1}{r}$, we obtain
\begin{equation} \label{ob}
    \BE_{f_{n,r,\bp}}[c\cdot 1_{\BR \setminus [-w,w]}] \le \frac{\eta}{3}\left( 1 +  \frac{\theta_\alpha}{\left( wn\log \frac{1}{r}\right)^\alpha }  \right).
\end{equation}
By definition of $r$, we have that 
\begin{equation} \label{oa}
\log \frac{1}{r} \ge 1-r = \frac{\theta_\alpha'}{wn}.
\end{equation}
Using inequality~\eqref{oa} in~\eqref{ob}, we obtain
\begin{equation}
    \BE_{f_{n,r,\bp}}[c\cdot 1_{\BR \setminus [-w,w]}] \le \frac{\eta}{3} \cdot \frac{3}{2} = \frac{\eta}{2},
\end{equation}
which is inequality~\eqref{oc}. Combining~\eqref{od}--\eqref{oc}, we deduce~\eqref{ne}, i.e.,
\begin{equation}
    \BE_{f_{n,r,\bp}}[c] \le C+\eta.
\end{equation}

Next, we show that $f_{n,r,\bp}$ satisfies the KL bound~\eqref{nm}. We begin by splitting the integration at the points $\pm z$. By finiteness of the considered KL-divergences, we have for each $|a|\le 1$
\begin{align}
    D(f_{n,r,\bp}\|T_{-a} f_{n,r,\bp})  - D(q \| T_{-a} q)   &\le  \int_{[-z,z]} \left( f_{n,r,\bp} \log \frac{f_{n,r,\bp}}{T_{-a} f_{n,r,\bp}} - q\log \frac{q}{T_{-a}q} \right) \nonumber \\
    & \hspace{1.45cm} + \int_{\BR \setminus [-z,z]}  f_{n,r,\bp} \log \frac{f_{n,r,\bp}}{T_{-a} f_{n,r,\bp}} \nonumber \\
    & \hspace{1.45cm} + \int_{\BR \setminus [-z,z]} q \log \frac{q}{T_{-a} q}. \label{lv}
\end{align}
We already have a uniform bound for the last integral in~\eqref{lv}: since $z\ge z_0$, the estimate in~\eqref{nn} holds and we obtain
\begin{equation} \label{tg}
    \sup_{|a|\le 1} \int_{\BR \setminus [-z,z]} q \log \frac{q}{T_a q}  \le \frac{\delta}{3}.
\end{equation}
We proceed to bounding the first integral in~\eqref{lv} uniformly by
\begin{equation} \label{th}
     \sup_{|a|\le 1} \int_{[-z,z]} \left( f_{n,r,\bp} \log \frac{f_{n,r,\bp}}{T_{-a} f_{n,r,\bp}} - q\log \frac{q}{T_{-a}q} \right) \le \frac{\delta}{3}.
\end{equation}
We do this via deriving an upper bound on the integrand that is uniform in both $a$ and the variable of integration. From $w\ge \delta/(12M)$~\eqref{oe}, $\mu_w\ge e^{-\tau w^\gamma}$~\eqref{sw}, and~\eqref{su}, we have that
\begin{equation} \label{sx}
    \varepsilon \le \frac{\mu_w}{2}\cdot \min\left( 1, \frac{\delta}{12Mw} \right).
\end{equation}
Define the function $g:[-w,w]\to [0,\varepsilon]$ by
\begin{equation}
    g := q - f_{n,r,\bp}.
\end{equation}
That the range of $g$ is contained within $[0,\varepsilon]$ follows since $f_{n,r,\bp}$ approximates $q$ from below uniformly over $[-w,w]$ to within $\varepsilon$. Thus, $z=w-1$ yields
\begin{equation}
    \sup_{|a|\le 1} \| T_ag \|_{L^\infty([-z,z])} \le \varepsilon.
\end{equation}
We note that, over $[-z,z]$, the inequality
\begin{align}
    f_{n,r,\bp} \log \frac{f_{n,r,\bp}}{T_{-a} f_{n,r,\bp}} - q\log \frac{q}{T_{-a}q} \le -q \log\left( 1 - T_{-a} \frac{g}{q} \right) - g \log \left( 1- \frac{g}{q} \right) + g \log \frac{T_{-a}q}{q} \label{lw}
\end{align}
holds; that all the logarithms are well defined follows since $g\le q$ over $[-w,w]$. Indeed, subtracting the left hand side from the right hand side in~\eqref{lw}, we get the function
\begin{equation}
    -q \log \left( 1 - \frac{g}{q} \right) - g \log\left( 1 - T_{-a} \frac{g}{q} \right),
\end{equation}
which is nonnegative over $[-z,z]$ since $g$ is nonnegative over $[-w,w]$. Now, we bound each of the terms in~\eqref{lw}. It is easy to see that for $0\le t \le 1/2$ one has
\begin{equation}
    - \log (1-t) \le 2t.
\end{equation}
Now, we show that $g/q \le 1/2$ over $[-w,w]$. Indeed, this is equivalent to $q\le 2f_{n,r,\bp}$ over $[-w,w]$. But $q-\varepsilon \le f_{n,r,\bp}$ over $[-w,w]$, which implies in view of $\varepsilon \le \mu_w/2 \le q/2$ (over $[-w,w]$) that $q\le 2f_{n,r,\bp}$, as desired. Thus, we obtain that over $[-z,z]$
\begin{equation} \label{no}
    -q \log\left( 1 - T_{-a} \frac{g}{q} \right) \le 2q T_{-a} \frac{g}{q} \le \frac{2M\varepsilon}{\mu_w},
\end{equation}
and
\begin{equation} \label{np}
    -g \log \left( 1- \frac{g}{q} \right) \le \frac{2g^2}{q} \le \frac{2\varepsilon^2}{\mu_w} \le \varepsilon.
\end{equation}
It is also clear that over $[-z,z]$
\begin{equation} \label{nq}
    g \log \frac{T_{-a}q}{q} \le \varepsilon \log \frac{M}{\mu_w} \le \varepsilon \left( \frac{M}{\mu_w} - 1 \right).
\end{equation}
Plugging in inequalities~\eqref{no}--\eqref{nq} into~\eqref{lw}, we obtain the uniform bound
\begin{equation}
    f_{n,r,\bp} \log \frac{f_{n,r,\bp}}{T_{-a} f_{n,r,\bp}} - q\log \frac{q}{T_{-a}q} \le \frac{3M \varepsilon}{\mu_w}
\end{equation}
over $[-z,z]$. Integrating, we deduce
\begin{align}
    \int_{[-z,z]} f_{n,r,\bp} \log \frac{f_{n,r,\bp}}{T_{-a} f_{n,r,\bp}} - q\log \frac{q}{T_{-a}q} &\le \frac{6zM\varepsilon}{\mu_w} < \frac{\delta}{3}, \label{lx}
\end{align}
where the last inequality follows by~\eqref{sx}.

It remains to upper bound the middle integral in~\eqref{lw}, for which we also derive a uniform upper bound
\begin{equation} \label{nr}
    \sup_{|a|\le 1} \int_{\BR \setminus [-z,z]}  f_{n,r,\bp} \log \frac{f_{n,r,\bp}}{T_{-a} f_{n,r,\bp}} \le \frac{\delta}{3}.
\end{equation}
We will further split the integration at the points $\pm(w+1)$. By evenness of $f_{n,r,\bp}$, we have that this integral depends only on $|a|$, i.e., for each $a\in [-1,1]$
\begin{equation}
    \int_{\BR \setminus [-z,z]}  \hspace{-0.4pt} f_{n,r,\bp} \log \frac{f_{n,r,\bp}}{T_{-a} f_{n,r,\bp}} = \int_{\BR \setminus [-z,z]}  \hspace{-0.4pt} f_{n,r,\bp} \log \frac{f_{n,r,\bp}}{T_{a} f_{n,r,\bp}}.
\end{equation}
Thus, it suffices for~\eqref{nr} to show that
\begin{equation}
    \sup_{0<a\le 1} \int_{\BR \setminus [-z,z]}  f_{n,r,\bp} \log \frac{f_{n,r,\bp}}{T_{-a} f_{n,r,\bp}} \le \frac{\delta}{3}.
\end{equation}
Consider first the integral
\begin{equation} \label{lz}
    \int_{\BR \setminus [-(w+1),w+1]} f_{n,r,\bp} \log \frac{f_{n,r,\bp}}{T_{-a} f_{n,r,\bp}}
\end{equation}
for fixed $a\in (0,1]$. From the proof of Theorem~\ref{eh}, we can write the integrand in~\eqref{lz} as follows. Extend the definition of $p_i$ to all $i\in \BZ$ by
\begin{equation}
    p_i := \left\{ \begin{array}{cl} 
    p_{|i|}, & \text{if } -N\le i \le -1, \vspace{2mm} \\
    p_Nr^{|i|-N}, & \text{if } |i|>N.
    \end{array} \right.
\end{equation}
For each $i\in \BZ$, there is an integer $j$ with $|j| \le n$, such that we have 
\begin{equation} \label{ma}
    f_{n,r,\bp} \log \frac{f_{n,r,\bp}}{T_{-a} f_{n,r,\bp}} = np_i \log \frac{p_i}{p_{i+j}}
\end{equation}
over $\calJ_{n,i}$ except possibly at a single point. By definition of $w$, we have that
\begin{equation}
    \BR \setminus [-(w+1),w+1] = \bigcup_{|i|\ge N+n} \calJ_{n,i}.
\end{equation}
Further, if $|i|\le N+n$ and $|j|\le n$, then $|i+j|\ge N$. Hence, from~\eqref{ma} we have that over $\calJ_{n,i}$ with $|i|\ge N+n$
\begin{align}
    f_{n,r,\bp} \log \frac{f_{n,r,\bp}}{T_{-a} f_{n,r,\bp}} &= n p_Nr^{|i|-N} (|i|-|i+j|) \log r \le n^2 p_Nr^{|i|-N}  \log \frac{1}{r}.
\end{align}
Summing over $|i|\ge N+n$, we obtain
\begin{align}
    \int_{\BR \setminus [-(w+1),w+1]} f_{n,r,\bp} \log \frac{f_{n,r,\bp}}{T_{-a} f_{n,r,\bp}} &= \sum_{|i|\ge N+n} \int_{\calJ_{n,i}} f_{n,r,\bp} \log \frac{f_{n,r,\bp}}{T_{-a} f_{n,r,\bp}} \\
    &\le  np_N\log \frac{1}{r} \sum_{|i| \ge n} r^{|i|} = \frac{2np_N r^n \log \frac{1}{r}}{1-r}.
\end{align}
Using the upper bound on $p_N$ in~\eqref{nk}, we obtain that
\begin{equation}
    \int_{\BR \setminus [-(w+1),w+1]} f_{n,r,\bp} \log \frac{f_{n,r,\bp}}{T_{-a} f_{n,r,\bp}} \le \frac{\eta nr^n \log \frac{1}{r}}{3\beta_1 \ell_\alpha w^\alpha}.
\end{equation}
As $1/e\le r \le 1$ and $\log \frac{1}{r} \le \frac{1}{r}-1$, using the definition of $r$ given in~\eqref{ny} and $w\ge z_4$ (see~\eqref{oe}), we have the bound
\begin{equation}
    \frac{\eta nr^n \log \frac{1}{r}}{3\beta_1 \ell_\alpha w^\alpha} \le \frac{e \eta n (1-r)}{3\beta_1 \ell_\alpha w^\alpha} \le \frac{e \eta  \theta_\alpha'}{3 \beta_1 \ell_\alpha w^{\alpha+1}} \le \frac{\delta}{6}.
\end{equation}
Thus, we have shown that
\begin{equation} \label{tf}
    \sup_{a\in (0,1]} \int_{\BR \setminus [-(w+1),w+1]} f_{n,r,\bp} \log \frac{f_{n,r,\bp}}{T_{-a} f_{n,r,\bp}} \le \frac{\delta}{6}.
\end{equation}

The final integral bound we need is the following:
\begin{equation} \label{mf}
    \sup_{0<a\le 1} \int_{w-1<|x|\le w+1} f_{n,r,\bp}(x) \log \frac{f_{n,r,\bp}(x)}{T_{-a} f_{n,r,\bp}(x)} \, dx \le \frac{\delta}{6}.
\end{equation}
By evenness of $f_{n,r,\bp}$, we have that
\begin{align}
    \int_{w-1<|x|\le w+1} f_{n,r,\bp}(x) \log \frac{f_{n,r,\bp}(x)}{T_{-a} f_{n,r,\bp}(x)} \, dx = \int_{(w-1,w+1]} f_{n,r,\bp} \log \frac{f_{n,r,\bp}^2}{(T_{-a}f_{n,r,\bp})\cdot (T_af_{n,r,\bp})}. \label{td}
\end{align}
Consider the function inside the logarithm in the integrand:
\begin{equation}
    \rho(x;a) := \frac{f_{n,r,\bp}(x)^2}{f_{n,r,\bp}(x+a) f_{n,r,\bp}(x-a)}.
\end{equation}
We will prove the uniform upper bound 
\begin{equation}
    \sup_{\substack{x\in (w-1,w+1] \\
    a\in (0,1]}} \, \rho(x;a) \le  \mathrm{exp}\left( 2w^{\gamma'} \right) ,
\end{equation}
where $\gamma' := (\gamma+\alpha)/2 \in (\gamma,\alpha)$ is as defined in~\eqref{sy}. Note that
\begin{equation}
    (w-1,w+1] = \bigcup_{i=N-n}^{N+n} \calJ_{n,i}.
\end{equation}
For each $a \in (0,1]$ and $x\in (w-1,w+1]$, there are integers $N-n \le i \le N+n$ and $0\le j,k \le n$ such that
\begin{equation}
    \rho(x;a) = \frac{p_i^2}{p_{i+j}p_{i-k}}.
\end{equation}
Thus, it suffices to show that $\mathrm{exp}(w^{\gamma'})$ is an upper bound on each of the terms 
\begin{equation}
    \frac{p_i}{p_j}, ~ \frac{p_k}{p_Nr^n}, ~ \frac{p_N}{p_k}, ~ \frac{1}{r^n}
\end{equation}
for $0\le i,j,k \le N-1$ with $|i-j|\le n$. First, for $1/r^n$, denoting $m = nw/ (2\theta_\alpha)^{1/\alpha} \ge 2$, we have the bound
\begin{equation}
    r^n = \left( \left( 1 - \frac{1}{m} \right)^{m} \right)^{(2\theta_\alpha)^{1/\alpha}/w} \ge 4^{-(2\theta_\alpha)^{1/\alpha}/w} \ge \frac{1}{2}.
\end{equation}
Hence,
\begin{equation} \label{sz}
    \frac{1}{r^n} \le 2 \le e^{w^{\gamma'}}.
\end{equation}
For $p_k/p_N$ with $0\le k \le N-1$, we have the bound
\begin{align}
    \frac{p_k}{p_N} &\le \frac{M}{np_N} = \frac{2M/(1-r)}{n\cdot (2p_N/(1-r))} = \frac{2M/(1-r)}{nP_{n,r,\bp}(\BR\setminus [-w,w])} \\
    &\le \frac{2M/(1-r)}{nQ(\BR\setminus [-w,w])} \le \frac{M/(1-r)}{ne^{-\tau w^{\gamma}}} = \frac{Mwe^{\tau w^{\gamma}}}{\theta_\alpha'}.
\end{align}
Hence,
\begin{equation} \label{ta}
    \frac{p_k}{p_Nr^n} \le \frac{2Mwe^{\tau w^{\gamma}}}{\theta_\alpha'} \le e^{w^{\gamma'}},
\end{equation}
where the last inequality follows from~\eqref{su} for all small $\delta$, e.g., for
\begin{equation}
    \delta \le 3\cdot 2^{2\gamma+2}\cdot \theta_{\alpha}' \cdot \chi^{-1}
\end{equation}
(alternatively, we may increase the size of $w$ at the outset). Consider next $p_i/p_j$ for $0\le i,j \le N-1$ with $|i-j|\le n$. By definition of the $p_k$ and uniform continuity of $q$, we have for $0\le k \le N-2$ 
\begin{equation}
    |p_k-p_{k+1}|\le \frac{\varepsilon}{n}.
\end{equation}
By the triangle inequality, we deduce
\begin{equation}
    \left| p_i - p_j \right| \le \frac{|i-j|\varepsilon}{n} \le \varepsilon. 
\end{equation}
Thus,
\begin{equation} \label{tb}
    \frac{p_i}{p_j} \le 1 + \frac{\varepsilon}{p_j} \le 1 + \frac{n\varepsilon}{\mu_w} \le 1 + \frac{n}{2} \le e^{w^{\gamma'}}.
\end{equation}
The last term $p_N/p_k$ can be bounded using~\eqref{nk} to obtain
\begin{align}
    \frac{p_N}{p_k} &\le \frac{\eta \cdot (1-r)/(6\beta_1 \ell_\alpha w^\alpha)}{\mu_w/n} = \frac{\eta \theta_\alpha'}{6\beta_1 \ell_\alpha \mu_w w^{\alpha+1}} \le  \frac{\eta \theta_\alpha'}{6\beta_1 \ell_\alpha } \cdot e^{\tau w^\gamma} \le e^{w^{\gamma'}}, \label{tc}
\end{align}
where the last inequality follows from~\eqref{su} for all small $\eta$, e.g., for
\begin{equation}
    \eta \le 24\beta_1 \ell_{\alpha} \cdot \chi^{-1} \cdot (\theta_\alpha')^{-2}
\end{equation}
(alternatively, we may increase the size of $w$ at the outset). Collecting~\eqref{sz},~\eqref{ta},~\eqref{tb}, and~\eqref{tc}, we obtain the following upper bound on the integral in~\eqref{td}:
\begin{equation}
    P_{n,r,\bp}((w-1,w+1]) \cdot 2w^{\gamma'}.
\end{equation}
Further, 
\begin{align}
    P_{n,r,\bp}((w-1,w+1]) ]&\le P_{n,r,\bp}((w-1,w]) + P_{n,r,\bp}((w,\infty)) \\
    &\le Q((w-1,w]) + \frac12 - P_{n,r,\bp}([0,w]) \\
    &\le Q((w-1,w]) + \frac12 - \left(  Q([0,w]) - \varepsilon w \right) \\
    &= \varepsilon w + Q((z,\infty)) \\
    &\le \varepsilon w + \frac{\eta}{12 \beta_1 \ell_\alpha z^\alpha} \\
    &\le \frac{\eta}{6 \beta_1 \ell_\alpha z^\alpha},
\end{align}
where the last inequality follows by~\eqref{te}. Hence, the integral in~\eqref{td} is upper bounded by
\begin{equation}
    \frac{2^{\alpha}}{3\beta_1\ell_\alpha w^{\alpha - \gamma'}} \cdot \eta \le \frac{\eta}{6},
\end{equation}
where the last inequality follows since $w\ge z_{\min}$ (see~\eqref{oe}). Thus, we have shown that~\eqref{mf} holds, which when combined with~\eqref{tf} gives~\eqref{nr}.

Combining~\eqref{tg},~\eqref{th}, and~\eqref{nr} gives, in view of~\eqref{lv}, the desired inequality~\eqref{nm}:
\begin{equation}
    \sup_{|a|\le 1} D(f_{n,r,\bp}\|T_af_{n,r,\bp})  \le \text{KL}^\star + \delta.
\end{equation}
Recall that we showed in~\eqref{ne} that
\begin{equation}
    \BE_{f_{n,r,\bp}}[c] \le C+\eta.
\end{equation}
To sum up, we make the dependence on $C$ explicit in the optimal values, i.e., we write $\KL^\star(C)$,  $\KL_{n,N,r}^\star(C)$, and $\KL_{\text{Cactus}}^\star(C)$. What we have shown above yields that
\begin{equation}
    \KL_{n,N,r}^\star(C+\eta) \le \KL^\star(C) + \delta.
\end{equation}
Consider the values
\begin{equation}
    \KL_{\textup{Cactus}}^\circ(C) := \inf_{(n,N,r)\in \BN^2\times (0,1)} \KL_{n,N,r}^\star(C),
\end{equation}
so (as defined by~\eqref{sa} in the statement of the theorem) $\KL_{\text{Cactus}}^\star(C) = \lim_{\eta \to 0^+} \KL_{\text{Cactus}}^\circ(C+\eta)$. We conclude that
\begin{equation}
    \KL^\star(C+\eta) \le \KL_{\textup{Cactus}}^\circ(C+\eta) \le \KL^\star(C) + \delta.
\end{equation}
Taking $\delta\to 0^+$, we have
\begin{equation}
    \KL^\star(C+\eta) \le \KL_{\text{Cactus}}^\circ(C+\eta) \le \KL^\star(C).
\end{equation}
Finally, being the infimum of a jointly convex function over a convex set, the function $C\mapsto \KL^\star(C)$ is convex. Since it is also finite, we see that $\KL^\star(C)$ is continuous over $(0,\infty)$. Thus, taking $\eta\to 0^+$, we see that
\begin{equation}
    \KL_{\text{Cactus}}^\star(C) = \KL^\star(C),
\end{equation}
completing the proof of the theorem.

\end{appendices}

\clearpage

\bibliographystyle{IEEEtran}
\bibliography{main}

\end{document}